\documentclass{article}

\usepackage{arxiv}

\usepackage{hyperref}       % hyperlinks
\usepackage{url}            % simple URL typesetting
\usepackage{booktabs}       % professional-quality tables
% % \usepackage{amsfonts}       % blackboard math symbols
% % \usepackage{nicefrac}       % compact symbols for 1/2, etc.
% \usepackage{microtype}      % microtypography
\usepackage{graphicx}
\usepackage{natbib}
\usepackage{doi}

% \usepackage{wrapfig}
% \usepackage{subcaption}
% \usepackage{lmodern}
% \usepackage{anyfontsize}
% % \usepackage{multirow}%
\usepackage[title]{appendix}%

\usepackage{amsmath}
\usepackage{amsthm}
\usepackage{amssymb}
\usepackage{enumerate}
\usepackage{color}
\usepackage{colortbl}
\usepackage{multirow}
\usepackage{float}
\usepackage{booktabs}
\usepackage{wrapfig}
\usepackage{subcaption}
\usepackage{authblk}

\newtheorem{theorem}{Theorem}[section]
\newtheorem{definition}[theorem]{Definition}

\newtheorem{lemma}[theorem]{Lemma}

\newcommand{\pkg}[1]{\textbf{#1}}

\title{Data-Driven Random Projection and Screening for High-Dimensional Generalized Linear Models}

\date{October 1, 2024}	% Here you can change the date presented in the paper title
%\date{} 					% Or removing it

\author{\href{https://orcid.org/0000-0003-0893-3190}{\includegraphics[scale=0.06]{./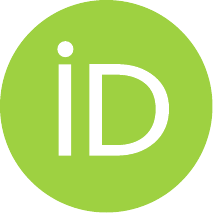}\hspace{1mm}Roman~Parzer} 
	\href{https://orcid.org/0000-0002-8014-4682}{\includegraphics[scale=0.06]{./plots/orcid.pdf}\hspace{1mm}Peter~Filzmoser} 
	\href{https://orcid.org/0000-0002-9613-7604}{\includegraphics[scale=0.06]{./plots/orcid.pdf}\hspace{1mm}Laura~Vana-G\"ur} \\
Institute of Statistics and Mathematical Methods in Economics\\
	TU Wien\\
	Vienna, Austria \\
	\texttt{romanparzer1@gmail.com} \\
}

% Uncomment to remove the date
%\date{}

% Uncomment to override  the `A preprint' in the header
%\renewcommand{\headeright}{Technical Report}
%\renewcommand{\undertitle}{Technical Report}

%%% Add PDF metadata to help others organize their library
%%% Once the PDF is generated, you can check the metadata with
%%% $ pdfinfo template.pdf
\hypersetup{
pdftitle={Data-Driven Random Projection and Screening for High-Dimensional Generalized Linear Models},
pdfsubject={stat.ME},
pdfauthor={Roman~Parzer, Peter~Filzmoser, Laura~Vana-G\"ur},
pdfkeywords={Generalized Linear Models, High-Dimensional Data, Predictive Modeling, Random Projection, Screening},
}

\begin{document}
\maketitle

\begin{abstract}
  We address the challenge of correlated predictors in high-dimensional GLMs, where regression coefficients range from sparse to dense, by proposing a data-driven random projection method.  This is particularly relevant for 
  applications where the number of predictors is (much) larger than the number of observations and the underlying structure -- whether sparse or dense -- is unknown.
  We achieve this by using ridge-type estimates for variable screening and random projection to incorporate information about the response-predictor relationship when performing dimensionality reduction.
  We demonstrate that a ridge estimator with a small penalty is effective for random projection and screening, but the penalty value must be carefully selected. Unlike in linear regression, where penalties approaching zero work well, this approach leads to overfitting in non-Gaussian families. Instead, we recommend a data-driven method for penalty selection.
  This data-driven random projection improves prediction performance over conventional random projections, even surpassing benchmarks like elastic net. Furthermore, an ensemble of multiple such random projections combined with probabilistic variable screening delivers the best aggregated results in prediction and variable ranking across varying sparsity levels in simulations at a rather low computational cost.
  Finally, three applications with count and binary responses demonstrate the method's advantages in interpretability and prediction accuracy.
\end{abstract}

% keywords can be removed
\keywords{Generalized Linear Models \and High-Dimensional Data \and Predictive Modeling \and Random Projection \and Screening}

\section{Introduction}

%% Intro
High-dimensional data in a regression context, where the number of variables exceeds the number of observations (i.e., $p > n$ or even $p \gg n$), has become increasingly common across various applications, posing substantial computational and statistical challenges, 
particularly when dealing with discrete responses.
%% What is the problem
In such cases, predictors are often correlated and the sparsity of the true model is uncertain. Moreover, interpretability is increasingly becoming a model requirement in a variety of fields.
This calls for computationally efficient approaches that enable both accurate predictions and interpretable relationships between the predictors and the response.

%% Why is it timely and relevant

The generalized linear model (GLM) extends the linear model to both continuous and discrete responses while maintaining interpretability. In high-dimensional settings, GLMs are typically regularized \citep[e.g.,][]{Tibsh1996LASSO,FanLi2001SCAD,Zou2005ElasticNet}.
Alternatively or complementarily, the dimensionality of the feature space can be reduced to a moderate size while learning and inference is performed in this reduced predictor space.  
One fast way to achieve this is \emph{variable screening}, i.e., selecting a subset of the predictors based on their utility. Methods for screening often rely on parametric \citep[such as the maximum likelihood estimates of univariate GLMs,][]{fan2009ultrahigh, Fan2010sisglms} or nonparametric \citep[e.g.,][]{fan2011nonparametric,mai2013kolmogorov,mai2015fusedkolmogorov,ke2023sufficient} measures but typically ignore predictor correlations. \cite{fan2009ultrahigh} suggest an iterative procedure to address this issue, while 
\cite{Wang2015HOLP} propose screening variables in linear regression using the high dimensional ordinary least squares projection (HOLP), a ridge-type estimator with a closed form solution when the penalty converges to zero. However, screening approaches based on ridge-type estimators are still rare in the context of GLMs.

%%%
%% RPs 
%%%
An approach similar in scope to variable screening is random projection (RP), which reduces the dimensionality of the feature space by linearly projecting the features onto a lower dimensional space, rather than employing a reduced set of the original features. 
Conventional random projections contain iid entries from a suitable distribution and are oblivious to 
the data to be used in the regression. Such projections 
have been used in a classification setting in 
e.g.,~\cite{cannings2017random}, while \cite{BayComprRegr,Dunson2020TargRandProj} focus on linear regression.
On the other hand, \cite{ryder2019asymmetric} propose a data-informed random projection using an asymmetric transformation of the predictor matrix without using information of the response.
% based on the estimated predictor covariance  and a tunable regularization parameter. 
\cite{parzer2024sparse} propose a data-driven projection for linear regression which incorporates information from the estimated HOLP coefficients, i.e., about both the predictors and the response.

In this paper, we leverage the computational advantages of variable screening and random projection and introduce a data-driven random projection
method for GLMs that accounts for the relationship between
predictors and the response, while addressing the potentially complex correlation
structure among the predictors. For this purpose we propose a ridge-type estimator which can be  integrated into a sparse random projection matrix and can also be employed for screening the variables prior to projection, as it performs well in preserving the true
relationship between predictors and the response.

A key aspect of the proposed ridge-type estimator is the selection of the penalty term. We extend the HOLP estimator to  GLMs with canonical link, deriving a closed form solution, and explore if it retains the same benefits for random projection and screening in GLMs as in linear regression.
We find that for non-Gaussian families, a ridge estimator with zero penalty can overfit, making penalty selection a non-trivial task. While small penalty values reduce bias, a data-driven approach to choosing the penalty works best. Specifically, we propose selecting the smallest penalty value for which the deviance ratio
in the fit stays under a certain threshold (e.g., 0.8 for non-Gaussian families and 0.999 for Gaussian).
Simulations show that using these ridge estimates in the sparse RP matrix outperforms conventional RP techniques.

%%%%%%%%%%%%%%%%%%%%%
%% Proposed approach 
%%%%%%%%%%%%%%%%%%%%%
%% How it relates to state-of-the-art
%
Given the randomness in the RP matrix, 
variability can be reduced by building ensembles of multiple RPs. For example, \citealt{BayComprRegr} propose repeatedly sampling RPs of different size and estimating an ensemble of linear regressions on the reduced predictors,  while 
\citealt{cannings2017random} generate various RPs for each classifier in an ensemble and pick the best one based on a appropriate loss function.
Additionally, ensembles of multiple RPs and variable screening steps have also been proposed 
achieve better 
predictive performance in linear regression \citep{Dunson2020TargRandProj,parzer2024sparse}.

In a similar fashion we extend the variable screening and random projection procedure by building an ensemble of GLMs and averaging them to form the final model,  adapting the algorithm in \cite{parzer2024sparse} to GLMs.
An extensive simulation study reveals that this ensemble improves predictive performance, ranks predictors effectively, and is computationally efficient, particularly with increasing predictor dimensions. It consistently performs well against state-of-the-art approaches such as penalized regression, random forests, and support vector machines (SVMs) across a range of sparsity settings and 
yields the best overall performance when aggregated across all scenarios. 
This broad applicability makes the method versatile for
high-dimensional regression with correlated predictors, especially when the sparsity of the underlying data generating process is unknown, and capable to computationally handle datasets with small $n$ and a (very) large number of predictors. 

The integration of the GLM framework with probabilistic screening improves model interpretability, as 
the coefficients in the marginal models can be extracted and their reliability and relevance can be assessed. The GLM framework also offers modeling flexibility, facilitating seamless comparison across  different family-link combinations.

%% Structure
The paper is organized as follows: 
Section~\ref{sec:method} introduces the GLM model, the proposed data-informed random projection and variable screening in GLMs
as well as the ensemble algorithm.
In Section~\ref{sec:sims}, extensive simulations motivating and comparing the method with state-of-the-art 
approaches are performed.
Applications are presented in 
Section~\ref{sec:data}.
Section~\ref{sec:conclusion} concludes the paper.

\section{Method}\label{sec:method}

This section begins by presenting the GLM model class, followed by an introduction to
the dimension reduction tools random projection and variable screening. Then, we propose a novel coefficient estimator useful for both these concepts, and state, how this estimator can be used to extend the algorithm in \cite{parzer2024sparse} to GLMs.
Throughout the section, we use the notation $[n]=\{1,\dots,n\}$ for any $n\in\mathbb{N}$.

\subsection{Generalized linear models}\label{sec:model}

We assume to observe high-dimensional data $\{(\boldsymbol{x}_i,y_i)\}_{i=1}^n, \boldsymbol{x}_i\in\mathbb{R}^p,y_i\in\mathbb{R}$ with $p\gg n$ from a 
GLM with the responses having conditional densities from a (reproductive) 
exponential dispersion family of the form
\begin{align}\label{eqn:y_density}
  f(y_i|\theta_i,\phi) = \exp\Bigl\{\frac{y_i\theta_i- b(\theta_i)}{a(\phi)} + c(y_i,\phi)\Bigr\},
\end{align}
where $\theta_i$ is the natural parameter,  $a(.)>0$ and $c(.)$ are specific real-valued functions determining different families, $\phi$ is a dispersion parameter, and 
$b(.)$ is the log-partition function normalizing the density to integrate to one. If $\phi$ is known, we obtain densities in the natural exponential family for our responses. 
It can be shown that $b(.)$ is twice differentiable and convex with $\mathbb{E}[y_i|\theta_i,\phi] = b'(\theta_i)$ and $\text{Var}(y_i|\theta_i,\phi) = a(\phi) b''(\theta_i)>0$, if the responses have (positive) second moments \citep[see e.g.][Section 2.2.2]{mccullagh1989GLM}.

The responses are related to the $p$-dimensional predictors through the conditional mean, i.e.,~the conditional mean of $y_i$ given $\boldsymbol{x}_i$
depends on a linear combination of the predictors through a (invertible) link function $g(.)$
\begin{align}\label{eqn:link}
  g(\mathbb{E}[y_i|\boldsymbol{x}_i]) = \beta_0 + \boldsymbol{x}_i'\boldsymbol{\beta}=:\eta_i,
\end{align}
where $\beta_0\in\mathbb{R}$ is the intercept and $\boldsymbol{\beta}\in\mathbb{R}^p$ is a vector of regression coefficients.
Equations~\eqref{eqn:y_density}~and~\eqref{eqn:link} give the functional relation $\theta_i = \theta_i(\beta_0,\boldsymbol{\beta},\boldsymbol{x}_i) = (b')^{-1}(g^{-1}(\eta_i))$
between $\theta_i$ and $\eta_i$. For each family, $g:=(b')^{-1}$ is the canonical link function, such that $\theta_i=\eta_i$. 
The full log-likelihood of the regression parameter $\boldsymbol{\beta}$ given the data  $\{(\boldsymbol{x}_i,y_i)\}_{i=1}^n$ is
\begin{align*}
  \ell(\beta_0,\boldsymbol{\beta}) = \sum_{i=1}^n \frac{y_i\theta_i(\beta_0,\boldsymbol{\beta},\boldsymbol{x}_i)- b(\theta_i(\beta_0,\boldsymbol{\beta},\boldsymbol{x}_i))}{a(\phi)} + c(y_i,\phi),
\end{align*}
but for maximization with respect to $\boldsymbol{\beta}$, it suffices to use 
\begin{align*}
  \tilde \ell(\beta_0,\boldsymbol{\beta}) = \sum_{i=1}^n y_i\theta_i(\beta_0,\boldsymbol{\beta},\boldsymbol{x}_i)- b(\theta_i(\beta_0,\boldsymbol{\beta},\boldsymbol{x}_i)),
\end{align*}
and treat $\phi$ as a nuisance parameter. 
In our general high-dimensional setting $p>n$, the predictor matrix $\boldsymbol{X}\in\mathbb{R}^{n\times p}$ with rows $\boldsymbol{x}_i$
can be assumed to have full $\text{rank}(\boldsymbol{X})=n$, so there is typically a whole (affine) subspace of $\boldsymbol\beta$s yielding the same $\eta_i$s
and we can not hope to find a unique maximizer $\boldsymbol{\hat\beta}$. 
In order to reduce the dimension of the problem, we resort to two techniques namely, random projection and variable screening.

\subsection{Random projection and variable screening }\label{sec:method:RP}

\paragraph{Random projection} can be used as a dimension-reduction tool for high-dimensional regression
by creating a random matrix $\Phi\in\mathbb{R}^{m\times p}$ with $m\ll p$ and using the reduced predictors
$\boldsymbol{z}_i=\Phi \boldsymbol{x}_i \in\mathbb{R}^m$ in a regression model. 
When using this method for
GLMs, we would like the predictors to still have most of the 
predictive power and that the true regression coefficients $\boldsymbol{\beta}$ are close to the row span of $\Phi$, 
such that they can be approximately recovered by the reduced predictors.
For this purpose, we propose to employ the following random projection. 
\begin{definition}\label{def:RP_CW}
  Let $h:[p]\to[m]$ be a random map such that for each $j\in[p]: h(j)=h_j \overset{iid}{\sim} \text{Unif}([m])$.
  Let $\boldsymbol{B}\in\mathbb{R}^{m\times p}$ be a binary matrix with ${B}_{h_j,j}=1$ for all $j\in[p]$ and remaining entries $0$, where we assume $\text{rank}(\boldsymbol{B})=m$.
  Let $\boldsymbol{D}\in \mathbb{R}^{p\times p}$ be a diagonal matrix with entries $d_j \in\mathbb{R}\setminus\{0\}, j\in[p]$.
  Then we call $\boldsymbol{\Phi} = \boldsymbol{BD}$ a CW random projection. 
\end{definition}
When using random sign diagonal elements $d_j \sim \text{Unif}(\{-1,1\})$ independent of $h$, we obtain a
\textit{sparse embedding matrix} $\boldsymbol{\Phi}\in\mathbb{R}^{m\times p}$, $m\ll p$ from \citet{Clarkson2013LowRankApprox}. 
Aside from being sparse and computationally efficient, 
this random projection also exhibits the property
$\boldsymbol{d}=(d_1,\dots,d_p)'\in\text{span}(\boldsymbol{\Phi'})$. Thus, by choosing $d_j\propto \beta_j, j=1,\dots,p$ instead of random sign diagonal elements, we can reach our goal of combining the variables to reduced predictors with strong predictive power which are able to recover the true regression coefficient.
In Theorem 1 of \cite{parzer2024sparse}, it is shown that this approach significantly reduces the expected squared error of future predictions in the linear regression setting.  
Below, we will propose a new estimator for a general family to use as diagonal elements.

\paragraph{Variable screening} aims at selecting a small subset of variables based on some marginal utility measure, 
and using the ones with the highest utility for further analysis. 
This procedure can complement the RP approach and further reduce the dimensionality of the problem, by first screening for the important variables and then performing the random projection. 

A seminal contribution in variable screening is the sure independence screening (SIS) of \cite{Fan2007SISforUHD}, who proposed to use the absolute marginal correlation of the predictors to the response in linear regression, which was later extended to GLMs in \cite{fan2009ultrahigh, Fan2010sisglms} by employing the maximum likelihood coefficient estimates of univariate GLMs instead of the correlation coefficient. 
However, in the presence of predictor correlation, screening based on a conditional utility measure (i.e., conditional on all the other variables in the model) is to be preferred to the unconditional one. 
To tackle this issue, \cite{Fan2007SISforUHD} propose iterative SIS,  which involves iteratively applying SIS and penalized regression to select a small set of variables, computing the residuals of a model fitted with the selected variables and using these residuals residuals as a response variable to continue finding relevant variables. This procedure was later extended to more general model classes in \cite{fan2009ultrahigh}.

In general, in a GLM each variable's utility given all the other variables amounts to $|\beta_j|$, so 
another approach is to find a screening coefficient which capable of detecting the correct order of magnitudes of the regression coefficients (but not necessarily their signs). We note that an estimator
which performs well for the purpose of random
projection (as discussed above) would also be a good candidate as a screening coefficient.
In the next section, we propose such an estimator for GLMs.

\subsection{Proposed estimator for random projection and screening}
In general, a ridge-type estimator 
\begin{align}\label{eqn:beta_lam}
    \boldsymbol{\hat{\beta}}_{\lambda}  = \text{argmin}_{\boldsymbol{\beta}\in\mathbb{R}^p}\min_{\beta_0\in\mathbb{R}} \Bigl\{-\tilde{\ell}( \beta_0,\boldsymbol{\beta}) + \frac{\lambda}{2}\sum_{j=1}^p{{\beta}}_j^2\Bigr\}, \quad \lambda>0
\end{align}
promises to be a sensible choice, both for screening and for inclusion in the CW random projection, because it considers all variables in the model and is non-sparse.
For linear regression, the HOLP estimator \citep{Wang2015HOLP} provides a screening utility measure considering all variables' effects simultaneously and has strong theoretical properties (see Section 2.1 in \cite{parzer2024sparse} for a discussion). It is explicitly given by
\begin{align}\label{eqn:HOLP}
  \boldsymbol{\hat\beta}_{\text{HOLP}} = \boldsymbol{X}'(\boldsymbol{X}\boldsymbol{X}')^{-1} \boldsymbol{y} = \lim_{\lambda \to 0} \Bigl( \text{argmin}_{\boldsymbol{\beta}\in\mathbb{R}^p}\Bigl\{ \sum_{i=1}^n (y_i  - \boldsymbol{x}_i'\boldsymbol{\beta})^2 + \frac{\lambda}{2} \sum_{j=1}^p{\beta}_j^2\Bigr\}\Bigr) ,
\end{align}
where $\boldsymbol{X}\in\mathbb{R}^{n\times p}$ is the predictor matrix and $\boldsymbol{y}\in\mathbb{R}^n$ is the response vector. Here, a model without an intercept  $\beta_0$ is assumed, which can be justified by centering $\boldsymbol{X}$ and $\boldsymbol{y}$. \cite{parzer2024sparse} used this HOLP coefficient \eqref{eqn:HOLP} as the diagonal elements in a CW random projection.

Motivated by \eqref{eqn:HOLP} and \cite{Kobak2020RidgeinHD}, who showed that the optimal ridge-penalty in linear regression can be negative due to implicit regularization from high-dimensional predictors, 
we first investigate whether $\lim_{\lambda\to 0}  \boldsymbol{\hat{\beta}}_{\lambda}$ is also an appropriate choice in GLMs. 
For logistic regression (binomial family with logit link), it is known 
that the estimator~\eqref{eqn:beta_lam}, scaled by its norm, converges to a 
hard-margin support vector machine (SVM) coefficient for $\lambda\to 0$ (Theorem 3 in \cite{Rosset2004BoostingMaxMar}, see Section 3.6.1 in \cite{Hastie2015STwSparsityBook} for a discussion). More generally, the following Theorem shows an explicit form of $\lim_{\lambda\to 0}  \boldsymbol{\hat{\beta}}_{\lambda}$ for families with canonical link function such that $g(y_i)\in\mathbb{R}$ for all $i\in[n]$.

\begin{theorem}\label{theorem:lim_beta_lam}
  For a family with canonical link function satisfying $g(y_i)\in\mathbb{R}$ for all $i\in[n]$, for a full rank predictor matrix $\text{rank}(\boldsymbol{X}\boldsymbol{X}')=n$, and in a model without an intercept $\beta_0$, we obtain
  \begin{align}\label{eq:holpglm}
    \lim_{\lambda\to 0}  \boldsymbol{\hat{\beta}}_{\lambda} = \boldsymbol{X}'(\boldsymbol{X}\boldsymbol{X}')^{-1}g(\boldsymbol{y}).
  \end{align}
\end{theorem}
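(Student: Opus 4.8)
The plan is to use the fact that, under the canonical link and without an intercept, $\theta_i=\eta_i=\boldsymbol{x}_i'\boldsymbol{\beta}$, so that $\tilde{\ell}(\boldsymbol{\beta})=\sum_{i=1}^n\bigl[y_i\boldsymbol{x}_i'\boldsymbol{\beta}-b(\boldsymbol{x}_i'\boldsymbol{\beta})\bigr]$ depends on $\boldsymbol{\beta}$ only through the linear predictor $\boldsymbol{\eta}=\boldsymbol{X}\boldsymbol{\beta}\in\mathbb{R}^n$. Since $b''>0$, the map $\boldsymbol{\eta}\mapsto\sum_i[y_i\eta_i-b(\eta_i)]$ is strictly concave and separates over coordinates, so it has a unique maximizer characterized by $b'(\eta_i)=y_i$, i.e.\ $\eta_i=(b')^{-1}(y_i)=g(y_i)$. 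The hypothesis $g(y_i)\in\mathbb{R}$ for all $i$ is exactly what guarantees this optimum is finite and attained, and it simultaneously shows that $-\tilde{\ell}$ is bounded below, with its global minimum attained at any $\boldsymbol{\beta}$ satisfying $\boldsymbol{X}\boldsymbol{\beta}=g(\boldsymbol{y})$. Because $\text{rank}(\boldsymbol{X})=n$, such $\boldsymbol{\beta}$ exist, so the unpenalized maximizers of $\tilde{\ell}$ form the nonempty affine subspace $\{\boldsymbol{\beta}:\boldsymbol{X}\boldsymbol{\beta}=g(\boldsymbol{y})\}$.

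Next I would record the stationarity condition of the penalized objective $L_\lambda(\boldsymbol{\beta})=-\tilde{\ell}(\boldsymbol{\beta})+\tfrac{\lambda}{2}\|\boldsymbol{\beta}\|^2$. Note $-\tilde{\ell}$ is convex (its Hessian $\boldsymbol{X}'\!\operatorname{diag}(b''(\boldsymbol{x}_i'\boldsymbol{\beta}))\boldsymbol{X}$ is positive semidefinite but of rank $\le n<p$), so strict convexity and coercivity — hence existence and uniqueness of the minimizer $\boldsymbol{\hat{\beta}}_\lambda$ — come entirely from the ridge term. Differentiating gives $\boldsymbol{X}'\bigl(b'(\boldsymbol{X}\boldsymbol{\hat{\beta}}_\lambda)-\boldsymbol{y}\bigr)+\lambda\boldsymbol{\hat{\beta}}_\lambda=\boldsymbol{0}$ (with $b'$ applied componentwise), whence $\boldsymbol{\hat{\beta}}_\lambda=\tfrac{1}{\lambda}\boldsymbol{X}'\bigl(\boldsymbol{y}-b'(\boldsymbol{X}\boldsymbol{\hat{\beta}}_\lambda)\bigr)\in\text{span}(\boldsymbol{X}')$ for every $\lambda>0$. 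This row-space membership is the first of two ingredients that will pin down the limit.

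The second ingredient is a uniform norm bound. Set $\boldsymbol{\beta}^*=\boldsymbol{X}'(\boldsymbol{X}\boldsymbol{X}')^{-1}g(\boldsymbol{y})$, which satisfies $\boldsymbol{X}\boldsymbol{\beta}^*=g(\boldsymbol{y})$ and therefore globally minimizes $-\tilde{\ell}$. Comparing $L_\lambda(\boldsymbol{\hat{\beta}}_\lambda)\le L_\lambda(\boldsymbol{\beta}^*)$ and using $-\tilde{\ell}(\boldsymbol{\hat{\beta}}_\lambda)\ge-\tilde{\ell}(\boldsymbol{\beta}^*)$ collapses to $\|\boldsymbol{\hat{\beta}}_\lambda\|\le\|\boldsymbol{\beta}^*\|$ for all $\lambda>0$. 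With this bound I would take any sequence $\lambda_k\to0$, extract a convergent subsequence $\boldsymbol{\hat{\beta}}_{\lambda_{k_j}}\to\boldsymbol{\beta}_\infty$, and pass to the limit: continuity of $\tilde{\ell}$ together with the sandwiching inequality forces $\boldsymbol{\beta}_\infty$ to attain the global maximum of $\tilde{\ell}$, i.e.\ $\boldsymbol{X}\boldsymbol{\beta}_\infty=g(\boldsymbol{y})$, while closedness of $\text{span}(\boldsymbol{X}')$ keeps $\boldsymbol{\beta}_\infty$ in the row space. The only row-space vector with $\boldsymbol{X}\boldsymbol{\beta}_\infty=g(\boldsymbol{y})$ is $\boldsymbol{\beta}^*$, so every subsequential limit equals $\boldsymbol{\beta}^*$; by boundedness the whole family converges to it, giving \eqref{eq:holpglm}.

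I expect the main obstacle to be the limit step rather than the algebra: one must justify that the minimum-norm (row-space) solution is selected as $\lambda\to0$ even though the loss is nonlinear. The estimate $\|\boldsymbol{\hat{\beta}}_\lambda\|\le\|\boldsymbol{\beta}^*\|$ is what renders this routine, so the crux is really the observation that $\boldsymbol{\beta}^*$ is simultaneously an unpenalized global optimum of $-\tilde{\ell}$ and an exact interpolator $\boldsymbol{X}\boldsymbol{\beta}^*=g(\boldsymbol{y})$ — which is precisely where the hypothesis $g(y_i)\in\mathbb{R}$ is indispensable, since without it the per-observation optimum $\eta_i=g(y_i)$ may escape to the boundary and the stated closed form breaks down.
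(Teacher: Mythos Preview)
Your proof is correct and shares with the paper the key observation that the first-order condition forces $\boldsymbol{\hat\beta}_\lambda\in\text{span}(\boldsymbol{X}')$, but from there the two arguments diverge. The paper proves a separate lemma that $\lambda\|\boldsymbol{\hat\beta}_\lambda\|^2\to0$ (by contradiction), feeds this back into the stationarity equation to obtain $\boldsymbol{X}'\bigl(b'(\boldsymbol{X}\boldsymbol{\hat\beta}_\lambda)-\boldsymbol{y}\bigr)\to\boldsymbol{0}$, uses the rank condition to deduce $b'(\boldsymbol{X}\boldsymbol{\hat\beta}_\lambda)\to\boldsymbol{y}$, and then applies continuity of $g=(b')^{-1}$ to get $\boldsymbol{X}\boldsymbol{\hat\beta}_\lambda\to g(\boldsymbol{y})$ directly. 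You instead obtain the sharper uniform bound $\|\boldsymbol{\hat\beta}_\lambda\|\le\|\boldsymbol{\beta}^*\|$ by a one-line comparison (note that it immediately implies the paper's lemma), and then run a compactness/subsequence argument combined with the squeeze $-\tilde\ell(\boldsymbol{\beta}^*)\le-\tilde\ell(\boldsymbol{\hat\beta}_\lambda)\le-\tilde\ell(\boldsymbol{\beta}^*)+\tfrac{\lambda}{2}\|\boldsymbol{\beta}^*\|^2$ to force any limit point to be an unpenalized optimizer. Your route is slightly more elementary in that it never revisits the first-order condition or invokes continuity of $(b')^{-1}$; the paper's route is more constructive in that it shows the linear predictors themselves converge without passing through subsequences.
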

The proof can be found in the Appendix~\ref{sec:app:proof}. The intercept-free assumption can be avoided by appropriate centering of $\boldsymbol{X}$ and $g(\boldsymbol{y})$. 
For practical usage, this exact limit has a few drawbacks. For example, when centering by the exact sample mean, $\boldsymbol{X}$ would not have full rank $n$ anymore. A workaround could be employing a generalized inverse or using different location estimators (like median or trimmed mean) for the purpose of centering the variables. In the cases where $g(y_i)\notin\mathbb{R}$, one could approximate $g(y_i)$ by using a continuity correction, but there
is no guarantee that this approximation works well. Also, Theorem~\ref{theorem:lim_beta_lam} only covers the canonical link for each family. 

As an alternative that can cover all cases, we propose to approximate  $\lim_{\lambda\to 0}  \boldsymbol{\hat{\beta}}_{\lambda}$
by a ridge estimator with a small fixed $\lambda_{\min}>0$ and to use $\boldsymbol{\hat\beta}_{\lambda_{\min}}$ from \eqref{eqn:beta_lam} for variable screening and as the diagonal elements in the 
data-informed random projection.

Furthermore, in order to understand whether more
penalization (i.e.,~through higher values of $\lambda$) is needed for other families as compared to the Gaussian case, we investigate 
alternative strategies to choosing the penalty value.
In a simulation example in Section~\ref{sec:sims:RP}, we investigate the choice of this $\lambda_\text{min}$ for different families. It shows, that for non-Gaussian families, it is beneficial to use a higher $\lambda_\text{min}$ to avoid a saturated fit. We also show that the resulting estimator allows good recovery of sign and magnitude of the true non-zero coefficients, 
while also performing well in terms of prediction when employed in the RP matrix. 

\subsection{SPAR algorithm for GLMs}\label{sec:method:alg}

Employing one single data-driven random projection with the proposed estimator and then estimating a GLM on the reduced predictors can lead to 
high variability due to randomness.
We address this by adapting the Sparse Projected Averaged Regression (SPAR) algorithm from \cite{parzer2024sparse} to GLMs, which builds an ensemble of GLMs in the following way: i) randomly sampling predictors for inclusion in the random projection based on the proposed screening coefficient, ii) projecting the sampled variables to a randomly chosen lower dimension using the proposed random projection, iii) estimating penalized GLMs with the reduced predictors and iv) averaging them to form the final model.
The adapted algorithm is given below, where $*$ mark changes compared to the linear regression formulation:
\begin{enumerate}
  \setlength\itemsep{0em}
    \item[$1.^*$] choose family with corresponding log-likelihood $\ell(.)$ and link and standardize covariate inputs $\boldsymbol{X}:n\times p$
    \item[$2.^*$] calculate $\boldsymbol{\hat\alpha}=\boldsymbol{\hat\beta}_{\lambda_{\min}}=\text{argmin}_{{\boldsymbol{\beta}}\in\mathbb{R}^p}\min_{\beta_0\in\mathbb{R}}\Bigl\{ -\tilde \ell( \beta_0,\boldsymbol{\beta}) + \frac{\lambda_\text{min}}{2}\sum_{j=1}^p{{\beta}}_j^2\Bigr\}$, see below for choice of $\lambda_\text{min}>0$ 
    \item[3.] For $k=1,\dots,M$:
     \begin{enumerate}
      \item[3.1.] draw $2n$ predictors with probabilities $p_j\propto |{\hat\alpha}_{j}|$ yielding screening index set $\boldsymbol{I}_k=\{j_1^k,\dots,j_{2n}^k\}\subset[p]$
      \item[3.2.] project remaining variables to dimension $m_k\sim \text{Unif}\{\log(p),\dots,n/2\}$ using $\boldsymbol{\Phi}_k:m_k\times 2n$ from Definition~\ref{def:RP_CW} with diagonal elements 
      $d_i={\hat \alpha}_{j_i^k}$
      to obtain reduced predictors $\boldsymbol{Z}_k=\boldsymbol{X}_{\cdot I_k}\boldsymbol{\Phi}_k' \in \mathbb{R}^{n\times m_k}$
      \item[$3.3.^*$] fit a GLM of $\boldsymbol{y}$ against $\boldsymbol{Z}_k$ (with small $L_2$-penalty) to obtain estimated coefficients $\boldsymbol{\gamma}^k\in\mathbb{R}^{m_k}$ and $\boldsymbol{\hat\beta}^k$, where $\boldsymbol{\hat\beta}_{I_k}^k=\boldsymbol{\Phi}_k'\boldsymbol{\gamma}^k$ and $\boldsymbol{\hat\beta}_{\bar I_k}^k=0$.
    \end{enumerate}
    \item[4.] for a given threshold $\nu>0$, set all entries ${\hat\beta}_j^k$ with $|{\hat\beta}_j^k|<\nu$ to $0$ for all $j,k$
    \item[5.] combine via simple average {on link-level} $\boldsymbol{\hat\beta} = \sum_{k=1}^M\boldsymbol{\hat\beta}^k / M$ {or on response level $\boldsymbol{\hat y} = \sum_{k=1}^M\boldsymbol{\boldsymbol{\hat y}}^k / M$}
    \item[$6.^*$] \textit{optional:} choose $M$ and $\nu$ via $10$-fold cross-validation by repeating steps 2 to 6
    (with fixed index sets $\boldsymbol{I}_k$ and projections $\boldsymbol{\Phi}_k$) for each fold and 
    evaluating the performance by model deviance (Dev) on the withheld fold; 
    and choose  
    \begin{align*}
      (M_{\text{best}},\nu_{\text{best}}) = \text{argmin}_{M,\nu}\text{Dev}(M,\nu)
    \end{align*}

    \item[7.] output the estimated coefficients and predictions for the chosen $M$ and $\nu$
  \end{enumerate}

We propose the following strategy for choosing the penalty $\lambda_\text{min}$. Along a decreasing path of $\lambda$s (e.g.,
equally spaced on the logarithmic scale), the fitted model will get closer and closer to a saturated fit and a deviance ratio of one. We propose to use the smallest $\lambda$, where the deviance ratio does not exceed to a certain threshold. It turns out that this threshold can be set to a value close to one (e.g.,~$0.999$) for the Gaussian family but to a lower value (e.g.,~$0.8$) for other families, as we will show in subsequent simulations. 

In fitting the marginal models in step 3.3, we also obtain intercept estimates, which can also be averaged and translated back to the original predictor scale to give an overall estimate $\hat\beta_0$ of the intercept $\beta_0$.

This algorithm allows for several variations. For instance, different measures can be used in the cross-validation, such as mean squared error instead of the family-dependent deviance.
If stricter thresholding is desired, $(M,\nu)$ can be chosen by the one-standard-error rule, which yields the sparsest $\boldsymbol{\hat\beta}$ within one standard error of the score of the best parameters.
While averaging in step 6 can also be done on response-level, simulations showed no performance gain in any setting, and averaging at the linear predictor-level is more interpretable, as a single final coefficient estimate $\boldsymbol{\hat\beta}$, as well as a distribution of each $\hat\beta_j$ over the marginal models can be reported.

\section{Simulation study}\label{sec:sims}

In a first simulation study we compare how well the estimator \eqref{eqn:beta_lam}  recovers 
the true active $\boldsymbol\beta$ across different penalty choices, and evaluate the predictive performance of the data-driven random projection with these estimators. We then compare 
SPAR algorithm's predictive performance and variable ranking ability against various benchmarks in a comprehensive simulation study.

\subsection{Setup}
We generate data from Equation~\eqref{eqn:y_density} for five family-link combinations with $n=200$ and use additionally $n_\text{test}=200$ observations as a test sample.  
The $p$-dimensional predictors are simulated as $\boldsymbol{x}_i\sim N_p(\boldsymbol 0,\boldsymbol{\Sigma})$, where we investigate different structures for $\boldsymbol{\Sigma}$: \emph{identity}, 
\emph{compound}
with  ${\Sigma}_{ij}=0.5$ if $i\neq j$ and $1$ otherwise, \emph{autocorrelated} with ${\Sigma}_{ij}=0.9^{|i-j|}$, and a \emph{block}
structure (blocks of size 100 where half of the blocks have a compound, the remaining blocks except the last have an autocorrelated structure, plus independent predictors in the last block.

We consider $p=500, 2000, 10000$ and three sparsity settings for $\boldsymbol{\beta}$: sparse ($a=[2\log(p)]$), medium ($a=[2\log(p)+n/2]$) and dense ($a=p/4$),
where $a$ is the number of non-zero entries in
$\boldsymbol{\beta}$. 
These entries are independently set as $(-1)^u (4\log(n)/\sqrt{n} +|z|)$ at uniformly random positions, where $u\sim \text{Bernoulli}(0.4)$ and $z\sim N(0,1)$ \citep{Fan2007SISforUHD}.
Finally, for each family-link we rescale $\boldsymbol{\beta}$ to control the signal strength by $\boldsymbol{\beta}'\boldsymbol{\Sigma}\boldsymbol{\beta}=c$, where $c=100,1000,0.25,10,0.125$ for binomial-logit, binomial-complementary log log (cloglog), Poisson-log, Gaussian-identity and Gaussian-log, respectively.
The intercept $\beta_0$ is set such that  $\sum_{i=1}^n\mathbb{E}[y_i|\boldsymbol{x}_i]/n=0.5,0.7,10,1,10$ for 
the respective family-links. These values are chosen to ensure that the problem is not too easy to solve but also that most methods can explain some of the deviance.

\subsection{Measures}
Prediction performance is assessed on the independent test samples 
$\{(\boldsymbol{x}_{n+i},y_{n+i}):i\in[n_\text{test}]\}$ by several measures:
\emph{mean squared prediction error} and its relative version
$$
\text{MSPE}(\boldsymbol{\hat y})= \frac{1}{n_\text{test}}\sum_{i=1}^{n_\text{test}}(y_{n+i} - \hat y_{n+i})^2,\quad
\text{rMSPE}(\boldsymbol{\hat y})=n_\text{test}\text{MSPE}(\boldsymbol{\hat y}) / \sum_{i=1}^{n_\text{test}}(y_{n+i} - \bar y)^2
$$
where $\bar y$ is the mean of the responses in the training sample (for the binomial family, MSPE corresponds to the Brier score). We also compute 
\emph{mean squared link estimation error} to assess the linear predictor
accuracy
$$
\text{MSLE}(\boldsymbol{\hat\beta})=\frac{1}
{n_\text{test}}\sum_{i=1}^{n_\text{test}}(\eta_i - \hat\eta_i)^2,\quad
\text{rMSLE}(\boldsymbol{\hat\beta})=n_\text{test}\frac{\text{MSLE}(\boldsymbol{\hat\beta})}{\sum_{i=1}^{n_\text{test}}(\hat \eta_i)^2},
$$
where $\eta_i =\beta_0 + \boldsymbol x_{n+i}'\boldsymbol{\beta}$ and $\hat\eta_i=\hat\beta_0 + \boldsymbol x_{n+i}'\boldsymbol{\hat\beta}$.
For the binomial family we also consider the area under the receiver operating characteristic curve (AUC) of the predicted probabilities 
$\hat p_{n+i} = 1/(1+\exp(-\hat\eta_i))$ to the binary responses.
 
Furthermore, we assess variable ranking using the \emph{partial AUC} (pAUC) which considers whether a variable is truly active and the absolute value of their estimated coefficient. To allow fair comparison between sparse methods and  methods delivering a dense coefficient vector, we limit false positive to $n/2$ \citep[see also][]{parzer2024sparse}.
%%%%%%%
\subsection{Simulations for screening and random projection}\label{sec:sims:RP}

%%%%%%
\paragraph{Recovery of true active $\boldsymbol{\beta}$}
We investigate whether the ridge estimator in \eqref{eqn:beta_lam}
is appropriate for the purpose of screening and data-informed random projection in the sense that it is able to recover the true active coefficients well by considering the correlation to the true active  $\boldsymbol{\beta}$.
We present results for $n=200, p=2000$, medium sparsity and $\boldsymbol{\Sigma}$  block-diagonal.
Figure~\ref{fig:Scr} shows the distribution of the correlation coefficient of the true active coefficients to different screening coefficients over 100 replications. We consider the ridge (L2) estimator from~\eqref{eqn:beta_lam} with the following choices for $\lambda$:  chosen by cross-validation based on the deviance criterion (L2$\_$cv); fixed at 10 (L2$\_$10), 1 (L2$\_$1), 0.1 (L2$\_$01), 0.01 (L2$\_$001);
the estimates for the penalty converging to zero (L2$\_$limit0). For binomial-logit we use the exact limit for $\lambda\to 0$, which is a hard-margin linear support-vector-machine (SVM) coefficient (estimated with \textsf{R} package \pkg{e1071} with cost set to $10^{10}$; \citealt{e1071R}). For the other families we use Equation~\ref{eq:holpglm} and impute any zeros present in the response variable by a small positive value for the poisson family. 
Furthermore, we employ a data-driven approach to choosing $\lambda$ as the smallest value  for which the fraction of (null) deviance explained does not exceed 80\%  (L2$\_$dev08), 95\% (L2$\_$dev095) and 99.9\%  (L2$\_$dev0999).
Finally, we examine the screening coefficient in \cite{Fan2010sisglms}, where each ${\beta}_j$ is estimated by the slope of a marginal GLM of the corresponding predictor
 $\{{x}_{ij},i\in[n]\}$ with an intercept (marGLM).

We observe that L2$\_$limit0 does not deliver the best results for all investigated family links (we omit the results for binomial-cloglog as they are similar to binomial-logit). 
L2$\_$cv also underperforms, and the marginal GLM coefficients are the least effective at recovering the true coefficients.
Fixing $\lambda$ to predetermined values is
also not appropriate for all family links, since the strength of the penalty varies between families. Table~\ref{tab:scr_lambda} in Appendix shows the average resulting $\lambda$ for each family using cross-validation and the different deviance cut-offs.
For binomial, differences among the ridge estimators are minor, but for  Poisson the performance declines as the deviance ratio threshold increases or $\lambda$ becomes too value. While the cross-validated penalty seems to be too large for the purpose of screening,  
using a 0.8 deviance ratio for non-Gaussian families and 0.999 for  Gaussian responses yields good results. The corresponding estimates are therefore 
well-suited as diagonal elements approximately proportional to the true $\boldsymbol{\beta}$ in our data-informed random projection from Definition~\ref{def:RP_CW}.
In this comparison, we also computed the pAUC of these coefficients, and the ratio of true active variables within the highest $3a$ absolute estimated values, but omit these results as they are similar to those based on correlation.
\begin{figure}
  \centering
  \includegraphics[width=1.0\columnwidth, trim=0 20 0 5,clip]{./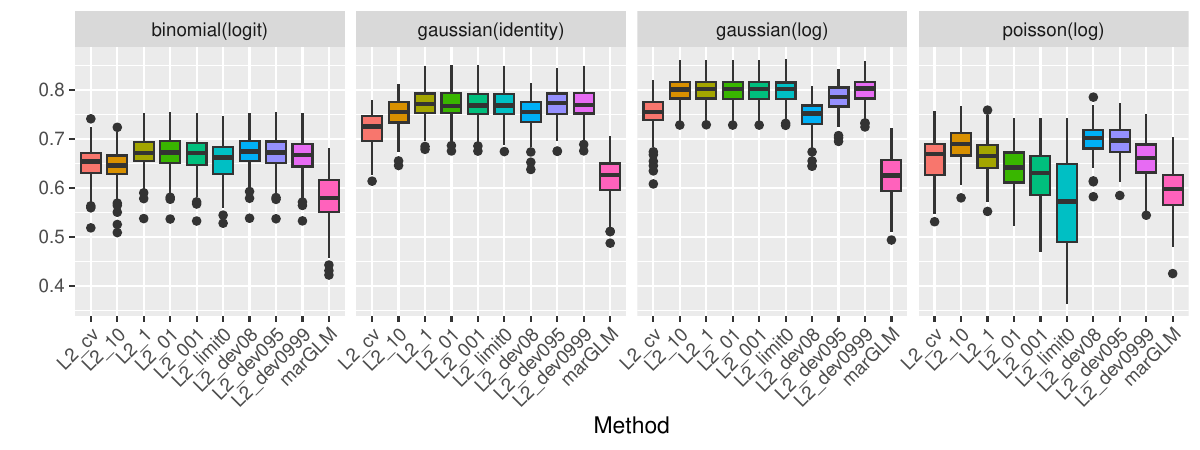}
  \caption{Correlation of true active coefficients to different screening estimators for $100$ replications 
  ($n=200,p=2000$, medium sparsity and 
  block-diagonal $\boldsymbol{\Sigma})$.}
  \label{fig:Scr}
\end{figure}
%%%%%
\paragraph{Data-informed random projection}
Next, we investigate the predictive performance of a model where the predictors are first projected onto a $m=n/4=50$ lower dimensional space using the proposed data-informed random projection with different screening coefficients, namely L2\_cv, L2\_limit0, L2\_dev08, L2\_dev095 and L2\_dev0999 from the previous section. Additionally, we also show the oracle performance of our proposal using the true $\boldsymbol{\beta}$ in the random projection (True\_Beta). % \boldsymbol{\hat\beta}_{\lambda_{\min}}$ in the diagonal (Ridge\_small) and then a logistic regression model is fit on the reduced predictors.
We also consider models where conventional RPs (i.e.,~Gaussian with $iid$ standard normal entries and SparseCW from Definition~\ref{def:RP_CW} with random sign diagonal elements) are used.
Furthermore, we estimate the models with adaptive LASSO \citep{Zou2006AdLASSO} and elastic net ($\alpha=3/4$) with penalty chosen by CV based on deviance as performance benchmarks.

In Figure~\ref{fig:RP} we present the rMSLE and the prediction error ($1-$AUC for binomial, rMSPE for Gaussian and Poisson) for $n_\text{test}=200$ over 100 repetitions and see that 
the proposed data-informed projection with ridge estimates in the diagonal generally increases the performance with respect to both metrics over the conventional RPs (Gaussian and SparseCW), reaching a lower link estimation error and a prediction error. The differences between the ridge estimators are less obvious, but generally, we see that the performance of the estimators in terms of screening also translates to the prediction power, namely that L2\_dev0999 or L2\_limit0 deliver the best results for the Gaussian family with both identity and log link, while L2\_dev08  achieves the best prediction performance for the other families.

For this example, it can be seen that this adaption of the diagonal of the projection matrix suffices to reach a better performance than high-dimensional regression benchmarks adaptive LASSO and elastic net, but there is a noticeable gap to the oracle performance.

\begin{figure}
  \centering
  \includegraphics[width=1.0\columnwidth, trim=0 20 0 5,clip]{./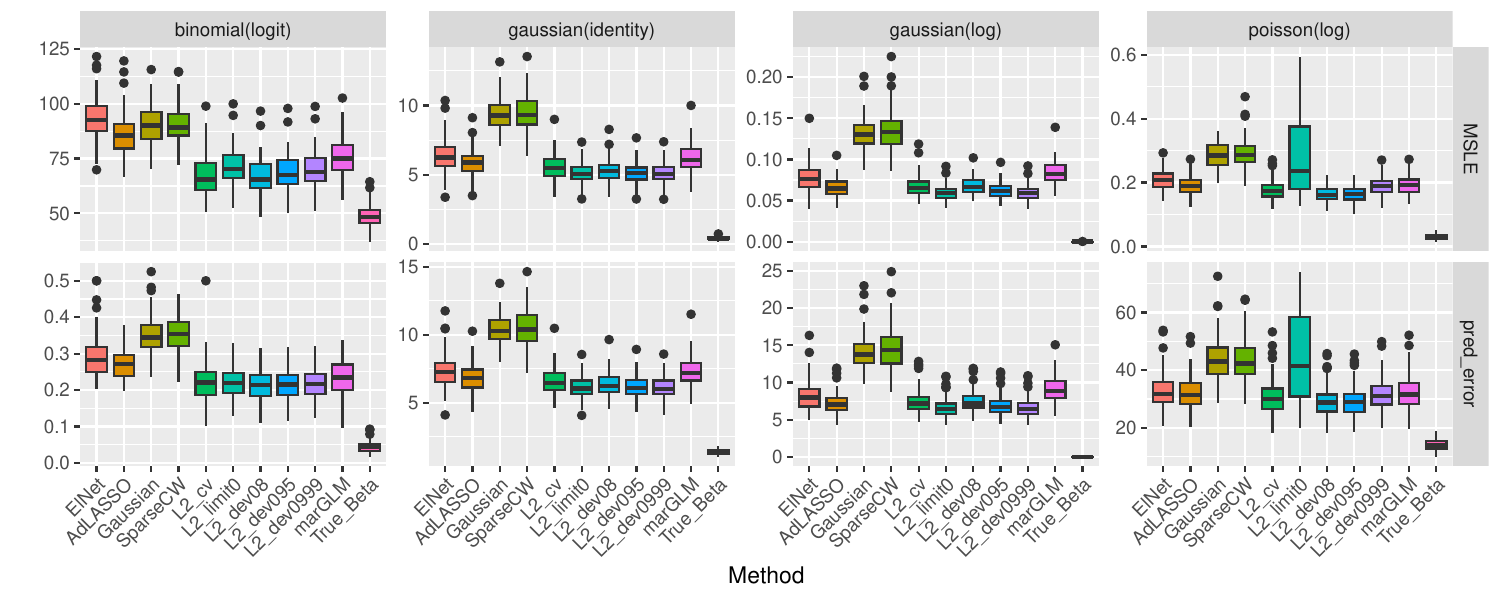}
  \caption{Link estimation error and prediction error ($1-$AUC for Binomial, MSPE otherwise) of Adaptive LASSO, Elastic Net, Gaussian and CW random projections, and our proposed projection with different ridge estimators, 
  for $100$ replications ($n=200, p=2000$, medium sparsity and block-diagonal $\boldsymbol{\Sigma}$).}
  \label{fig:RP}
\end{figure}

\subsection{Benchmark simulations for SPAR}\label{sec:sims:SPAR}

We consider $n=200,p=2000$ in sparse, medium and dense setting for the five family-link combinations and a block structure for $\boldsymbol{\Sigma}$. We report here results for the canonical links.
In the supplementary materials, we report additional results, which include different covariance structures for $p=2000$ predictors for the medium setting, as well as different values of $p$ for the medium setting with block-covariance.

We compare the following methods. 
Assuming the GLM, we use LASSO, elastic net ($\alpha=3/4$), ridge, adaptive LASSO (using package \pkg{glmnet} \citealt{glmnet2023})
and sure independence screening \citep[SIS;][]{Fan2010sisglms}. 
Then, as general regression benchmarks, we use random forest (RF implemented \textsf{R}-package \pkg{randomForest} with \texttt{mtry} parameter tuned by CV; \citealt{randomForestR}) and
support vector machines (\textsf{R} package \pkg{e1071} with cost and kernel -- linear or radial -- tuned by CV; \citealt{e1071R}). For these two methods, we do not report results for link estimation, since they do not (necessarily) estimate a linear predictor. For variable ranking, we use the reported importance measure (i.e.,~(weighted) mean of the individual trees' decrease in Gini-index or MSE produced by each variable) for RF and
the inner product of the estimated coefficients and the support vectors for SVM.

Finally, as a set of methods using random projections, we use an ensemble of $50$ models with the conventional RP from Definition~\ref{def:RP_CW} with random sign entries without any screening and random goal dimensions as in step 3.2 of the SPAR-algorithm in Section~\ref{sec:method:alg} (RP\_CW\_Ensemble), Targeted Random Projections (TARP), which is an adaptation of \cite{Dunson2020TargRandProj} to GLMs where we perform screening based on marginal 
GLM coefficients and use the conventional RP of \cite{ACHLIOPTAS2003JL} with $\psi=1/6$ for an ensemble of $20$ models, 
and our proposed SPAR algorithm from Section~\ref{sec:method:alg}, once without cross-validation with fixed $20$ models and $\lambda$ chosen from a grid of values based on the 
best model deviance achieved on the training set, and once with cross-validation for a grid of $\lambda$ values and up to $50$ models.

Figure~\ref{fig:pred} shows prediction and link estimation performance for $100$ replications of $n=200,p=2000$ in the block covariance setting. The results for LASSO and SIS are omitted in this figure to provide a more comprehensive overview since they are always outperformed by AdLASSO and ElNet, respectively.
Generally, SPAR and SPAR-CV are outperformed by AdLASSO and elastic net in terms of prediction and link estimation only in the sparse settings, while they are among the best performing methods in all other settings. Especially the performance in link estimation for the logistic regression is remarkable. Figure~\ref{fig:sim_pred_cov_settings} in the Supplementary materials shows the prediction error for the five family-link combinations across the different covariance settings for $\boldsymbol{\Sigma}$ for fixed medium sparsity. The results for autocorrelated and block covariance are rather similar, while all methods perform better for the compound covariance and worse for the identity covariance, which could be due to the amount of information contained in the covariance structure. As final inspection of prediction performance, Figure~\ref{fig:sim_pred_p} in the Appendix shows prediction errors for increasing $p$ in the block covariance and medium sparsity. All methods lose some performance when the dimension increases, but adaptive LASSO and elastic net seem to suffer the most. 
Figure~\ref{fig:pAUC} shows that SPAR and SPAR-CV both perform well in terms of variable ranking as measured by pAUC, where only SVM yields a similar performance across all settings. For the ensemble methods, we here compute the pAUC of the final averaged $\boldsymbol{\hat\beta}$.
As expected, AdLASSO and ElNet perform best in the sparse settings. 

\begin{figure}[t!]
\centering
\includegraphics[width=0.9\columnwidth,trim=0 20 0 6,clip]{./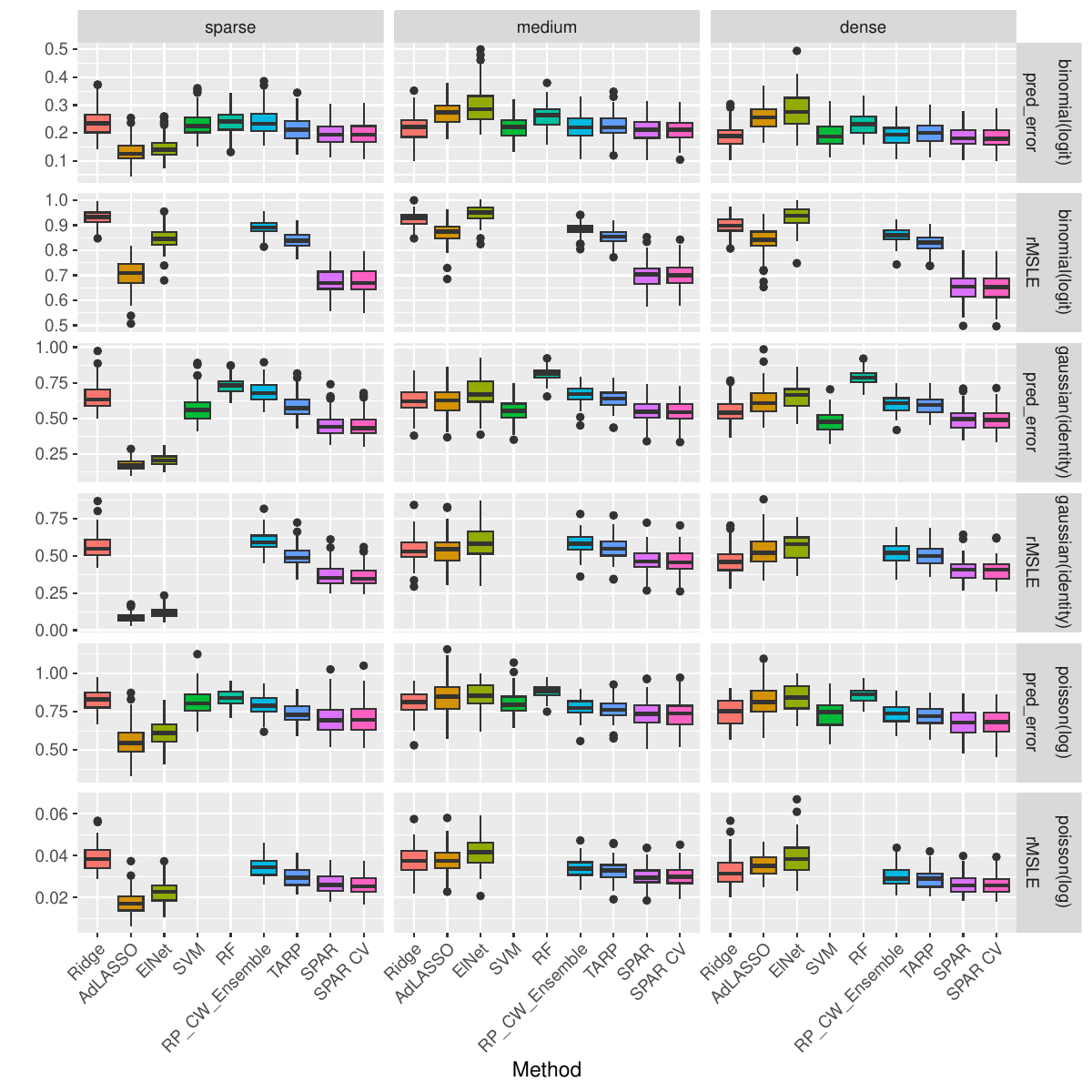}
\caption{Comparison of prediction performance ($1-$AUC for binomial, rMSPE otherwise) and link estimation (rMSLE, not reported for SVM and RF) over 100 repetitions ($n=200$, $p=2000$, medium sparsity and block $\boldsymbol{\Sigma}$).}
    \label{fig:pred}
  \end{figure}

\begin{figure}
\centering
\includegraphics[width=0.9\columnwidth,trim=0 20 0 6,clip]{./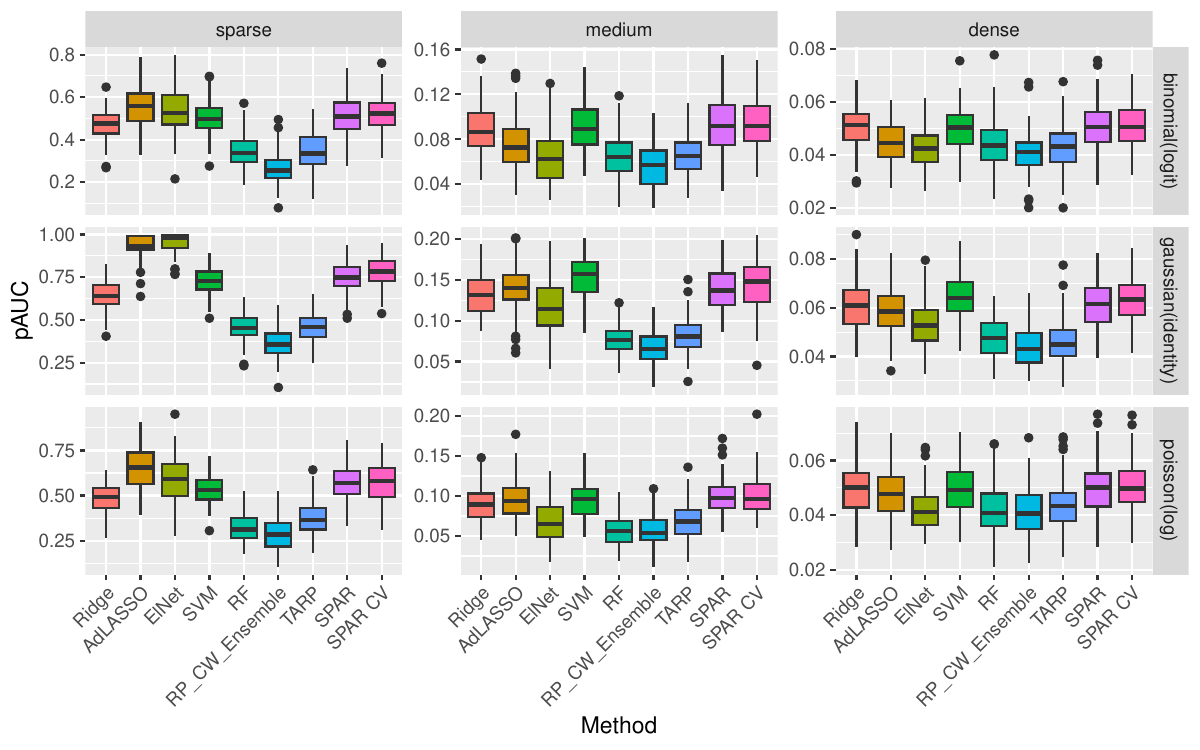}
\caption{Comparison of variable ranking (pAUC rescaled to [0, 1] for better comparison) over 100 repetitions for $n=200$, $p=2000$, medium sparsity and block $\boldsymbol{\Sigma}$.}
\label{fig:pAUC}
\end{figure}

%%%  Across the board with ranks  %%%
To assess the performance across all investigated settings (including those in the supplementary materials), we rank the methods from best (1) to worst (11) for each replication and setting. Figure~\ref{fig:ranks} shows the average ranks (with 99\% confidence intervals) across all investigated settings. Aside from SVM in variable ranking, we find that SPAR-CV and SPAR achieve the best average ranks for all of the three measures. Using the Friedman and the post-hoc Nemenyi tests for multiple comparisons \citep{hollander2013nonparametric}, we can also report that SPAR and SPAR-CV are significantly better than all other methods for prediction and link estimation, and, together with SVM, they are also significantly better for variable ranking.
Even if SPAR and SPAR-CV are not best in every scenario, they perform well across-the-board, making them especially suitable  when the degree of sparsity is unknown. 

Finally, Figure~\ref{fig:sims_time} shows computing time for three increasing values of $p$ for the binomial family for the medium sparsity setting with block covariance. 
Most methods, except RF, SVM, and SIS, inherit  computational efficiency from \pkg{glmnet}, which uses a fast \textsf{C++} implementation for  canonical links.  SPAR is the second fastest for larger $p$ (after SIS) for the non-canonical family-link,  with  its time mostly spent fitting the $M=20$ marginal models and thus 
hardly affected by $p$.
We note that the ensemble of SparseCW random projections (RP\_CW\_Ensemble) is slower than SPAR due to projecting all $p$ variables, resulting in larger projection matrices. TARP is slower than SPAR likely due to using less sparse projections.
SPAR-CV, while slower than most methods for small $p$, scales efficiently with increasing $p$. 
The computing time for other family-link combinations follows  similar patterns based on whether the link is canonical, so those plots are omitted for brevity.

\begin{figure}
\centering
\includegraphics[width=0.8\columnwidth,trim=0 22 0 5,clip]{./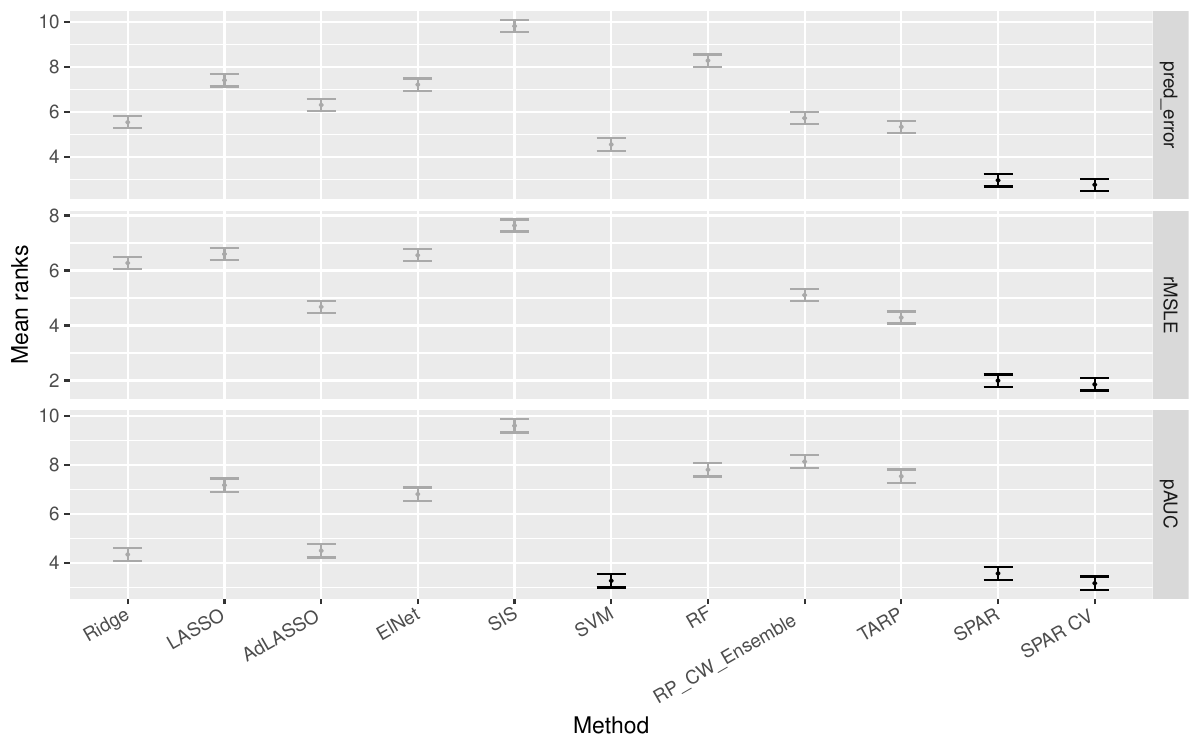}
\caption{Mean ranks with 99\% confidence intervals for prediction error, link estimation (rMSLE), and variable ranking (pAUC) across all investigated settings and $n_\text{rep}=100$ replications. Methods not significantly worse than the best method are colored black.}
\label{fig:ranks}
\end{figure}

\begin{figure}
\centering
\includegraphics[width=0.9\columnwidth,trim=0 5 0 5,clip]{./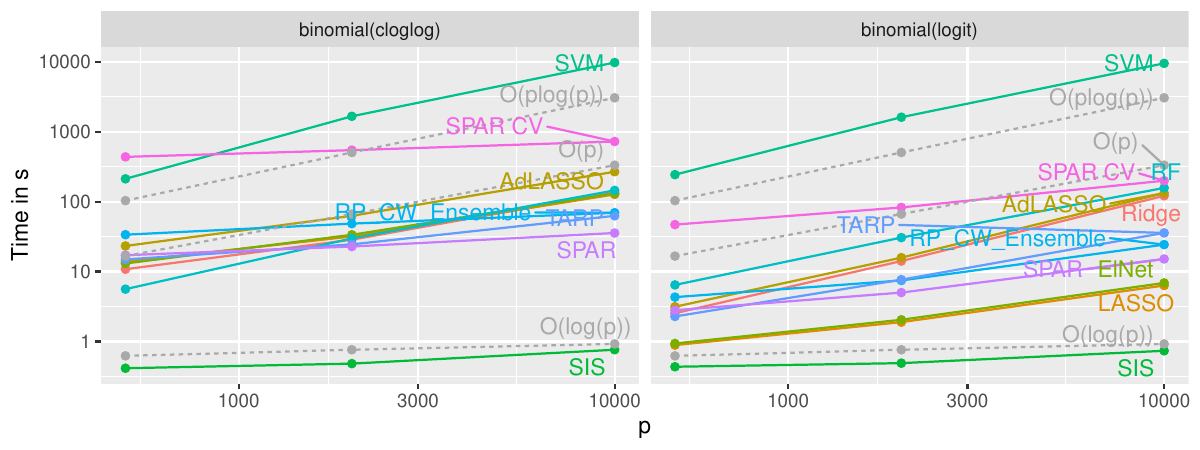}
\caption{Comparison of average computing time for increasing $p$, $n=200$, medium sparsity and block $\boldsymbol{\Sigma}$, for the binomial family.}
\label{fig:sims_time}
\end{figure}

% \clearpage

\section{Data applications}\label{sec:data}

We illustrate the proposed method on one high-dimensional dataset with a count response and two datasets with binary response.
Table~\ref{tab:data_performance} shows the average prediction performance metrics for all analyzed datasets over $100$ random training/test splits of ratio three to one. For SPAR-CV, we use deviance as cross-validation measure for all datasets.
The other methods have been tuned as described in Section~\ref{sec:sims:SPAR}.

\begin{table}[t!]
  \centering
  \caption{Mean prediction metrics (with standard errors) on datasets over $100$ three-to-one training/test splits, the best three methods for each dataset and metric are marked in bold.}
  % \laura{a bit more intuition on why we use all three measures for binary}\rom{My suggestion: we drop bAcc; AUC is quite high for many methods; to better quantify how certain the methods are, we also look at brier score (rMSPE)}
  \label{tab:data_performance}
  \resizebox*{1.0\textwidth}{!}{
    \begin{tabular}{lrlrlrlrlrlrlrl}
\toprule
\multicolumn{1}{c}{ } & \multicolumn{2}{c}{FTIR spectra} & \multicolumn{4}{c}{Darwin} & \multicolumn{4}{c}{DLBCL} & \multicolumn{4}{c}{DLBCL\_extended} \\
\cmidrule(l{3pt}r{3pt}){2-3} \cmidrule(l{3pt}r{3pt}){4-7} \cmidrule(l{3pt}r{3pt}){8-11} \cmidrule(l{3pt}r{3pt}){12-15}
\multicolumn{1}{c}{Method } & \multicolumn{2}{c}{rMSPE} & \multicolumn{2}{c}{AUC} & \multicolumn{2}{c}{rMSPE} & \multicolumn{2}{c}{AUC} & \multicolumn{2}{c}{rMSPE} & \multicolumn{2}{c}{AUC} & \multicolumn{2}{c}{rMSPE} \\
\midrule
Ridge & \textbf{0.059} & (0.004) & 0.915 & (0.004) & 0.542 & (0.007) & \textbf{0.995} & (0.001) & 0.282 & (0.01) & 0.812 & (0.009) & 0.856 & (0.008)\\
LASSO & 0.079 & (0.006) & 0.914 & (0.004) & 0.509 & (0.01) & 0.963 & (0.005) & 0.456 & (0.02) & \textbf{0.942} & (0.007) & \textbf{0.601} & (0.016)\\
AdLASSO & 0.588 & (0.058) & 0.782 & (0.006) & 0.805 & (0.007) & 0.875 & (0.009) & 0.676 & (0.023) & 0.838 & (0.011) & 0.727 & (0.02)\\
ElNet & 0.078 & (0.006) & 0.921 & (0.003) & 0.491 & (0.008) & 0.979 & (0.004) & 0.346 & (0.017) & \textbf{0.964} & (0.005) & \textbf{0.514} & (0.017)\\
SIS & 0.424 & (0.051) & 0.898 & (0.004) & 0.600 & (0.007) & 0.902 & (0.009) & 0.678 & (0.021) & 0.902 & (0.009) & 0.678 & (0.021)\\
\addlinespace
SVM & 0.855 & (0.009) & \textbf{0.954} & (0.003) & \textbf{0.355} & (0.011) & 0.992 & (0.002) & \textbf{0.165} & (0.012) & 0.885 & (0.013) & 0.648 & (0.021)\\
RF & 0.126 & (0.009) & \textbf{0.957} & (0.003) & 0.444 & (0.006) & 0.960 & (0.004) & 0.544 & (0.01) & 0.912 & (0.007) & 0.679 & (0.011)\\
\addlinespace
RP\_CW\_Ensemble & 0.066 & (0.005) & 0.927 & (0.003) & 0.450 & (0.009) & 0.988 & (0.003) & 0.403 & (0.013) & 0.815 & (0.011) & 0.835 & (0.016)\\
TARP & 0.067 & (0.007) & 0.927 & (0.003) & 0.447 & (0.01) & 0.983 & (0.003) & 0.359 & (0.012) & \textbf{0.939} & (0.005) & \textbf{0.515} & (0.017)\\
SPAR & \textbf{0.036} & (0.004) & 0.935 & (0.003) & \textbf{0.439} & (0.013) & \textbf{0.994} & (0.002) & \textbf{0.179} & (0.013) & 0.921 & (0.007) & 0.627 & (0.022)\\
SPAR CV & \textbf{0.033} & (0.003) & \textbf{0.935} & (0.003) & \textbf{0.437} & (0.013) & \textbf{0.994} & (0.001) & \textbf{0.183} & (0.012) & 0.932 & (0.006) & 0.603 & (0.023)\\
\bottomrule
\end{tabular}
 }%
\end{table}

\subsection{FTIR spectra}
Fourier transform infrared (FTIR) spectroscopy is commonly used in tribology to analyze changes in oil samples during use. The dataset, introduced in \cite{PFEIFFER2022TribData}, consists of $n=34$ artificially altered automotive engine oil samples. Two types of artificial alteration were done in a laboratory by heating the oil and exposing it to dried air, simulating real-life degradation. For each of the $n=34$ samples, the difference FTIR spectra, i.e., the difference between the absorbances of the fresh oil and those of the degraded oils at $p=1814$ wavenumbers, and alteration duration (in hours) were recorded along with the type of alteration.

Table~\ref{tab:data_performance} shows the relative MSPE for various methods. SPAR(-CV) was estimated using both a (quasi) Poisson family and a Gaussian family, both with log links. We only present results for the Gaussian model, as it gave the best predictions. 
Moreover, SPAR(-CV) performed best overall, followed by ridge regression.

Figure~\ref{fig:trib_appl} (top panel) displays the difference spectrum for one sample, highlighting intervals with high or total absorption. These regions, typically non-informative due to hydrocarbon properties, are often pre-processed or discarded from analysis. The bottom panel shows the standardized coefficients estimated by SPAR-CV across $M=50$ marginal models, where the coefficients for each variable are sorted by their absolute values and displayed vertically as a color gradient.
Even without pre-processing, non-informative variables rarely appear in the models and have low coefficients when they do, demonstrating the reliability of the SPAR method. The distribution of coefficients indicates which wavenumbers correlate positively or negatively with longer alteration durations.

\begin{figure}[t!]
 % \begin{subfigure}{.495\textwidth}
    \includegraphics[width=0.87\columnwidth,trim=5 20 0 0,clip]{./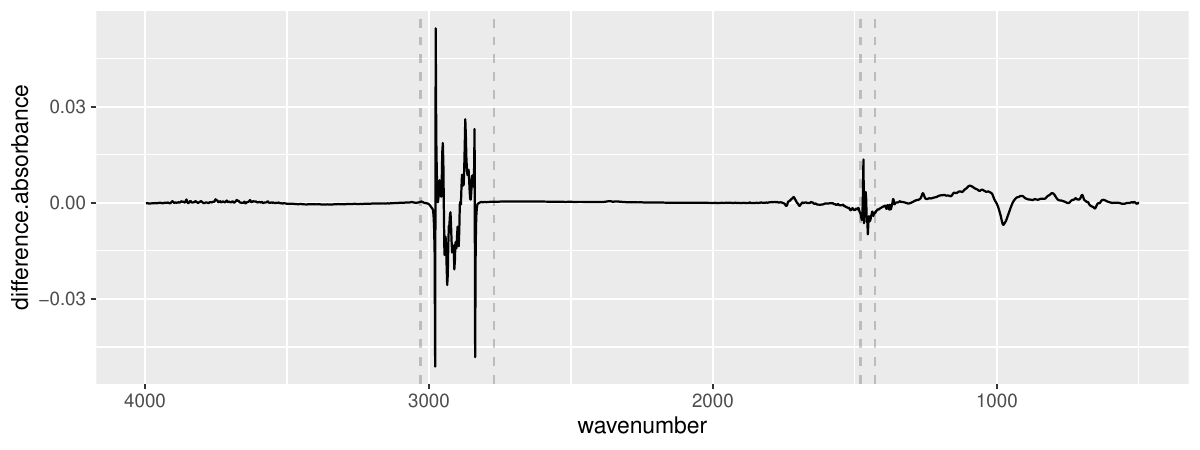}
    \includegraphics[width=1\columnwidth,trim=0 5 0 0,clip]{./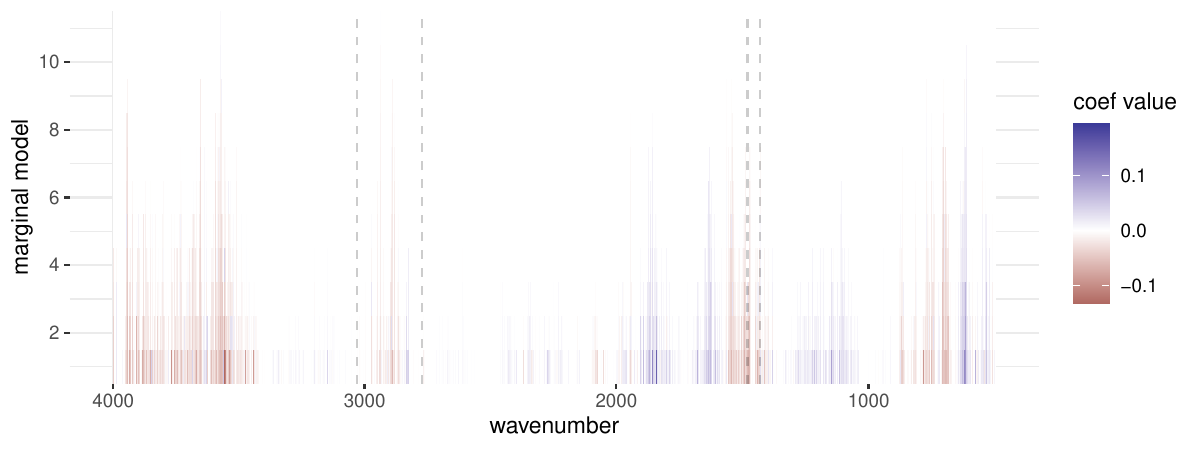}
  \caption{Difference FTIR spectrum for one oil sample in the tribology dataset (top) and coefficients of $p=1814$ wavenumbers estimated by SPAR-CV in each of $M=50$ marginal models (bottom). Marked intervals represent non-informative variables.}
  \label{fig:trib_appl}
  \end{figure}

\subsection{Darwin Alzheimer dataset} 
The dataset, introduced in \cite{CILIA2022darwin}, contains a binary response for Alzheimer's disease (AD) together with $p=450$ extracted variables from 25 handwriting tests (18 features per task) for 89 AD patients and 85 healthy people $(n=174)$ and can be downloaded from the  \href{https://archive.ics.uci.edu/dataset/732/darwin}{UC Irvine Machine Learning Repository}. 
The dataset has been reported to contain outliers, so before proceeding with the analysis, we screened 
for multivariate outliers and imputed them using the \emph{detect deviating cells} algorithm of \cite{rousseeuw2018detecting} implemented in \textsf{R} package \pkg{cellWise} \citep{cellwise_rpkg}. 
Table~\ref{tab:data_performance} presents the area under the ROC curve and the relative MSPE (Brier score)
as prediction metrics for the binary classification task.
For all methods based on GLMs (i.e., except SVMs and RFs), results are shown for the binomial family with logit link, as it provided the best performance among all investigated link functions. 
SPAR and SPAR-CV performed similarly and were outperformed in AUC by both SVM and RF, and in rMSPE by SVM alone. This suggests SPAR is a viable option for modeling this dataset, while offering lower computational cost.

Figure~\ref{fig:coefs_darwin} displays the estimated standardized coefficients for the $p=450$ variables grouped by feature, across all marginal models.
Feature blocks generally show either a positive or negative impact on the probability of AD across all 25 tasks (indicated by blocks of blue or red lines). 
For example, the probability of AD increases with the total time spent on a task
(total$\_$time). The number of pendowns (num$\_$of$\_$pendown, the number of times the pen hits the paper) is  positively associated with the AD likelihood for the first few tasks, but negatively correlated for the remaining tasks, with a strong negative association observed for the task ``Copy the fields of a postal order'', which requires  many pendowns when copying the individual fields. This may suggest that a lower number of pendowns in this task indicates failure to complete the task and thus a higher likelihood of AD.

\begin{figure}[t!]
    \centering
    \includegraphics[width=1.0\columnwidth,trim=0 5 0 10,clip]{./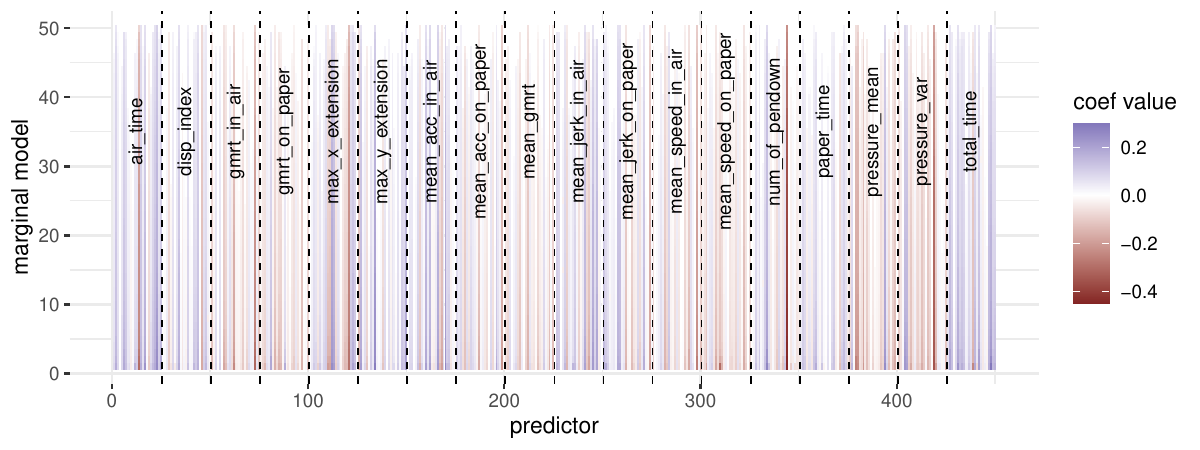}
    \caption{Estimated coefficients for the $p=450$ variables in the Darwin dataset, for each of the $M=50$ marginal models in the SPAR-CV algorithm.}
    \label{fig:coefs_darwin}
\end{figure}

\subsection{Diffuse large B-cell lymphoma (DLBCL)}
The \emph{Diffuse large B-cell lymphoma (DLBCL)} microarray dataset has been introduced in  
\cite{Shipp2002LymphData} and is available at the
\href{https://www.openml.org/search?type=data&status=active&id=45088}{OpenML platform}. It contains 
expression data for $p=5469$ genes of $n=77$ patients diagnosed with two different types of lymphomas: DLBCL ($58$ cases) and follicular lymphoma (FL, $19$ cases), making this a rather \emph{imbalanced} dataset.
Given that we do not know the true degree of sparsity in any of the employed datasets, we artificially extend the DLBCL dataset by randomly permuting the rows of the covariate matrix and appending the permuted covariates to the existing dataset twice. This results in an \emph{extended DLBCL dataset} with $n=77$ and $p=5469\times 3=16407$.

Again, Table~\ref{tab:data_performance} presents the area under the ROC curve and the relative MSPE (Brier score)
as prediction metrics and for all methods based on GLMs results are shown for the binomial family with logit.
%% lymphoma
SPAR and SPAR-CV perform very well on the original dataset, being outperformed only by SVM in terms of rMSPE and by ridge in terms of AUC. 
For the extended dataset, the best-performing methods in terms of AUC are elastic net followed by LASSO and TARP.
SPAR(-CV) performs satisfactorily, being outperformed only by elastic net and TARP in terms of rMSPE.
The superior performance of the sparse methods is not unexpected and is in line with the simulation results, as the degree of sparsity in this dataset has been artificially increased through the permutation of the variables.

\section{Conclusion}\label{sec:conclusion}

In this paper, we propose a novel data-driven random projection method to be employed in high-dimensional GLMs, which efficiently reduces the dimensionality of the problem while preserving essential information between the response and the (possibly correlated) predictors.
We achieve this by using ridge-type estimates of the regression coefficients to construct the random projection matrix which should accurately recover the true regression coefficients. These coefficients can also be employed for  variable screening, which can be used prior to random projection to further reduce the dimensionality of the problem.

A critical aspect of the proposed method is the selection of the penalty term for the ridge-type estimator. 
The penalty should generally be small to avoid over-regularization. 
However, determining the optimal size of the penalty has proven to be a non-trivial task. 
For linear regression, a ridge estimator with penalty converging to zero has shown good properties \citep{Wang2015HOLP,parzer2024sparse}. In this paper,
we derive the analytical formula for such an estimator in GLMs with canonical links and find that this estimator leads to lower predictive performance for non-Gaussian families, likely due to overfitting. More generally, there is no \emph{one-size-fits-all} penalty value for all families.
Instead, we advocate for a data-driven approach by decreasing the penalty value as long as the improvement in a goodness-of-fit criterion (e.g., deviance) exceeds a certain threshold (e.g., 0.8 for non-Gaussian families and $0.999$ for Gaussian) seems to be the best strategy.

Through extensive simulations, we show that integrating  multiple probabilistic variable screening and projection steps into an ensemble of medium-sized GLMs improves prediction accuracy and variable ranking, without too much computational cost. To implement this method, we adapt the SPAR algorithm from \cite{parzer2024sparse}, ensuring that it is tailored to the specific requirements of high-dimensional GLMs
and show that 
the method achieves top overall performance aggregated over all investigated scenarios, making it a valid choice when the true degree of sparsity is not known in practice. At the cost of higher computation time, which still scales well with $p$, the method can, to some degree, benefit from cross-validation, most notably in terms of ranking the variables based on their relevance.

Finally, three real data sets illustrate the interpretability and superior prediction performance of the proposed approach.
%Future work
A potential extension  includes adapting the method to multivariate GLMs (e.g., multinomial) and multivariate responses (e.g.,~multivariate linear regression). A key extension in this direction would be designing a data-driven RP that can preserve the multivariate structure in the data while also being straightforward and fast to compute. Additionally, ways of incorporating non-linearities in the RP could be explored.

%%% Supplementary materials (if any)
%%% ------------------------------------------
\section*{Supplementary materials}
% The supplementary materials include [the list of supplementary materials and their brief description]. They are available from the journals page at \url{https://journals.sagepub.com/home/smj}. [At the submission stage, all supplementary materials (or the links to access them) should be submitted jointly with the main paper.]

The code to produce all simulations and data applications can be found in \url{https://github.com/RomanParzer/SPAR_GLM_Paper_Code}, and the R-package performing the SPAR-algorithm can be found on \url{https://github.com/RomanParzer/SPAR}.

%%% Acknowledgements (if any)
%%% ------------------------------------------
% \section*{Acknowledgements}

%%% Declaration of conflicting interests (should always be included)
%%% -----------------------------------------------------------------
\section*{Declaration of conflicting interests}
The authors declared no potential conflicts of interest with respect to the research, authorship and/or
publication of this article.
   %%% Alternatively, please, disclose here potential conflicting interests. 

%%% Funding (if any)
%%% ------------------------------------------
\section*{Funding}
Roman Parzer and Laura Vana-Gür acknowledge funding from the Austrian Science Fund (FWF) for
the project ``High-dimensional statistical learning: New methods to advance
economic and sustainability policies'' (ZK 35).

\begin{appendices}

\section{Proof of Theorem~\ref{theorem:lim_beta_lam}}\label{sec:app:proof}
To prove Theorem~\ref{theorem:lim_beta_lam}, we will need the following Lemma.
\begin{lemma}\label{lemma:betalam}
  Let $f_\lambda(\boldsymbol{\beta})= -\tilde \ell( \boldsymbol{\beta}) + \frac{\lambda}{2}\sum_{j=1}^p{\boldsymbol{\beta}}_j^2$ and 
  $\boldsymbol{\beta}_\lambda :=\text{argmin}_{{\boldsymbol{\beta}}\in\mathbb{R}^p}f_\lambda(\boldsymbol{\beta})$.\linebreak
  Then, $\lambda\boldsymbol{\beta}_\lambda \to \boldsymbol{0}$ for $\lambda\to 0$.
\end{lemma}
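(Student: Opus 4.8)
The plan is to exploit the full-rank assumption to produce a single, $\lambda$-independent comparison vector against which $\boldsymbol{\beta}_\lambda$ can be bounded \emph{uniformly} in $\lambda$, after which the claim is immediate. First I would note that, since $b$ is convex, $-\tilde\ell$ is convex in $\boldsymbol{\beta}$, and the ridge penalty $\frac{\lambda}{2}\|\boldsymbol{\beta}\|^2$ is strongly convex; hence for every $\lambda>0$ the minimizer $\boldsymbol{\beta}_\lambda$ exists and is unique. As comparison point take the $\lambda$-independent vector $\boldsymbol{\beta}^{*}=\boldsymbol{X}'(\boldsymbol{X}\boldsymbol{X}')^{-1}g(\boldsymbol{y})$ from \eqref{eq:holpglm}, which is well defined under $\text{rank}(\boldsymbol{X}\boldsymbol{X}')=n$ and satisfies $\boldsymbol{X}\boldsymbol{\beta}^{*}=g(\boldsymbol{y})$ because $\boldsymbol{X}\boldsymbol{X}'(\boldsymbol{X}\boldsymbol{X}')^{-1}=\boldsymbol{I}_n$.

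The crucial observation is that $\boldsymbol{\beta}^{*}$ is a global minimizer of the unpenalized $-\tilde\ell$. Indeed, in the intercept-free canonical-link model $\theta_i=\eta_i=\boldsymbol{x}_i'\boldsymbol{\beta}$, so $-\tilde\ell(\boldsymbol{\beta})=\sum_{i=1}^n\bigl[b(\boldsymbol{x}_i'\boldsymbol{\beta})-y_i\boldsymbol{x}_i'\boldsymbol{\beta}\bigr]$ depends on $\boldsymbol{\beta}$ only through $\boldsymbol{\eta}=\boldsymbol{X}\boldsymbol{\beta}$ via the separable convex map $\psi(\boldsymbol{\eta})=\sum_{i=1}^n\bigl[b(\eta_i)-y_i\eta_i\bigr]$. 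Each summand is convex and minimized where $b'(\eta_i)=y_i$, i.e.\ at $\eta_i=g(y_i)$ with $g=(b')^{-1}$, which is finite exactly by the hypothesis $g(y_i)\in\mathbb{R}$. The full-rank condition makes $\boldsymbol{\beta}\mapsto\boldsymbol{X}\boldsymbol{\beta}$ surjective onto $\mathbb{R}^n$, so the unconstrained minimizer $\boldsymbol{\eta}=g(\boldsymbol{y})$ of $\psi$ is attainable, and it is attained precisely at $\boldsymbol{\beta}^{*}$. This yields $-\tilde\ell(\boldsymbol{\beta})\ge -\tilde\ell(\boldsymbol{\beta}^{*})$ for all $\boldsymbol{\beta}\in\mathbb{R}^p$.

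Feeding this into the optimality inequality $f_\lambda(\boldsymbol{\beta}_\lambda)\le f_\lambda(\boldsymbol{\beta}^{*})$ gives $-\tilde\ell(\boldsymbol{\beta}_\lambda)+\frac{\lambda}{2}\|\boldsymbol{\beta}_\lambda\|^2\le -\tilde\ell(\boldsymbol{\beta}^{*})+\frac{\lambda}{2}\|\boldsymbol{\beta}^{*}\|^2$. Since the likelihood gap $-\tilde\ell(\boldsymbol{\beta}_\lambda)+\tilde\ell(\boldsymbol{\beta}^{*})\ge 0$ by the previous step, it cancels favorably and leaves $\frac{\lambda}{2}\|\boldsymbol{\beta}_\lambda\|^2\le\frac{\lambda}{2}\|\boldsymbol{\beta}^{*}\|^2$, i.e.\ $\|\boldsymbol{\beta}_\lambda\|\le\|\boldsymbol{\beta}^{*}\|$ for every $\lambda>0$, a bound independent of $\lambda$.

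The conclusion is then immediate: $\|\lambda\boldsymbol{\beta}_\lambda\|=\lambda\|\boldsymbol{\beta}_\lambda\|\le\lambda\|\boldsymbol{\beta}^{*}\|\to\boldsymbol{0}$ as $\lambda\to 0$. The step I expect to be the main obstacle is the global-minimizer claim for $-\tilde\ell$: one must confirm that the saturated fit is genuinely attained at a \emph{finite} linear predictor, which is exactly where the standing assumptions $g(y_i)\in\mathbb{R}$ and $\text{rank}(\boldsymbol{X}\boldsymbol{X}')=n$ are used (and why, e.g., binary logistic responses with $g(y_i)=\pm\infty$ fall outside this lemma and instead display the separate SVM-limit behaviour). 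Everything else reduces to routine convexity bookkeeping.
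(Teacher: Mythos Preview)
Your proof is correct and uses essentially the same approach as the paper: both take the comparison vector $\boldsymbol{\beta}^{*}=\boldsymbol{X}'(\boldsymbol{X}\boldsymbol{X}')^{-1}g(\boldsymbol{y})$, observe that it globally minimizes $-\tilde\ell$ (the saturated fit), and combine this with the optimality inequality $f_\lambda(\boldsymbol{\beta}_\lambda)\le f_\lambda(\boldsymbol{\beta}^{*})$ to bound the penalty term. The only difference is packaging: the paper phrases the final step as a contradiction argument to obtain the slightly stronger $\lambda\|\boldsymbol{\beta}_\lambda\|^2\to 0$, whereas you give the direct uniform bound $\|\boldsymbol{\beta}_\lambda\|\le\|\boldsymbol{\beta}^{*}\|$, which is cleaner and in fact also yields the paper's stronger conclusion immediately.
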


\begin{proof}[Proof of Lemma~\ref{lemma:betalam}]
 Each summand $f(\theta_i) = y_i\theta_i - b(\theta_i)$ of $\tilde\ell(.)$ is a concave function of $\theta_i$ and has a unique maximum at $\hat \theta_i = (b')^{-1}(y_i)$ if $(b')^{-1}(y_i)\in\mathbb{R}$, or is unbounded otherwise (as is the case for a Bernoulli variable with logit-link, where $(b')^{-1}(y_i) = \log(y_i/(1-y_i))$ and $y_i\in\{0,1\}$).
  Note that for the family with the canonical link and defined values $g(y_i)$, we have $\theta(\boldsymbol{\beta},\boldsymbol{x}_i) = \boldsymbol{x}_i'\boldsymbol{\beta}$ and, therefore, $f_\lambda$ is strictly convex and has a unique minimum.
  We will even show the stronger result, that $\lambda\|\boldsymbol{\beta}_\lambda\|^2 \to 0$. If that were not the case, there exists some $\varepsilon>0$ such that $\forall \Lambda \exists \lambda < \Lambda$ such that $\lambda\|\boldsymbol{\beta}_\lambda\|^2>\varepsilon$.
  Let $\Lambda:=\varepsilon/\|\boldsymbol{\hat\beta}\|^2$, where $\boldsymbol{\hat\beta} = \boldsymbol{X}'(\boldsymbol{X}\boldsymbol{X}')^{-1}g(\boldsymbol{y})$.
  According to Section~\ref{sec:model}, any $\boldsymbol{\beta}$ maximizes $\tilde\ell$, 
  if and only if $\boldsymbol{x}_i'\boldsymbol{\beta} = g(y_i)$ for all $i\in[n]$ or 
  $\boldsymbol{X}\boldsymbol{\beta}=g(\boldsymbol{y})$ in matrix notation, since we use the canonical link $g=(b')^{-1}$.
It is easy to see that $\boldsymbol{\hat\beta}$ satisfies this condition, and has the smallest $L_2$-norm among all such $\boldsymbol{\beta}$ (see e.g. proof of Lemma 3 in \cite{parzer2024sparse}).
  For the $\lambda$ corresponding to $\Lambda$, we then get
  \begin{align*}
    f_\lambda(\boldsymbol{\hat\beta})-f_\lambda(\boldsymbol{\beta}_\lambda) &= \underbrace{\tilde\ell(\boldsymbol{\beta}_\lambda) - \tilde\ell(\boldsymbol{\hat\beta})}_{\leq 0} + \underbrace{\frac{\lambda}{2}}_{< \Lambda/2} \|\boldsymbol{\hat\beta}\|^2 - \underbrace{\frac{\lambda}{2} \|\boldsymbol{\beta}_\lambda\|^2}_{> \varepsilon/2} \\
    &< \frac{\Lambda}{2}\|\boldsymbol{\hat\beta}\|^2 - \frac{\varepsilon}{2} = 0,
  \end{align*}
  which contradicts the definition of $\boldsymbol{\beta}_\lambda$.
\end{proof}

\begin{proof}[Proof of Theorem~\ref{theorem:lim_beta_lam}]
  We need to show that $\boldsymbol{\beta}_\lambda \to \boldsymbol{\hat\beta}$  for $\lambda \to 0$, where $\boldsymbol{\hat\beta} = \boldsymbol{X}'(\boldsymbol{X}\boldsymbol{X}')^{-1}g(\boldsymbol{y})$.
  Let $\boldsymbol{P\beta}_\lambda = \boldsymbol{\hat \beta} + (\boldsymbol{I}-\boldsymbol{X}'(\boldsymbol{X}\boldsymbol{X}')^{-1}\boldsymbol{X})\boldsymbol{\beta}_\lambda$ 
  be the orthogonal projection of $\boldsymbol{\beta}_\lambda$ to the (affine) subspace $\{\boldsymbol{X}\boldsymbol{\beta}=g(\boldsymbol{y})\}= \boldsymbol{\hat\beta} + \text{ker}(\boldsymbol{X})$. 
   Then
  \begin{align*}
    \|\boldsymbol{\hat\beta} - \boldsymbol{\beta}_\lambda\|^2 &= \|\boldsymbol{\hat \beta} - \boldsymbol{P\beta}_\lambda\|^2 + \|\boldsymbol{P\beta}_\lambda - \boldsymbol{\beta}_\lambda\|^2 = \\
    &= \|(\boldsymbol{I}-\boldsymbol{X}'(\boldsymbol{X}\boldsymbol{X}')^{-1}\boldsymbol{X})\boldsymbol{\beta}_\lambda\|^2 + 
    \|\underbrace{\boldsymbol{\hat\beta} - \boldsymbol{X}'(\boldsymbol{X}\boldsymbol{X}')^{-1}\boldsymbol{X}\boldsymbol{\beta}_\lambda}_{=  \boldsymbol{X}'(\boldsymbol{X}\boldsymbol{X}')^{-1}(g(\boldsymbol{y}) - \boldsymbol{X}\boldsymbol{\beta}_\lambda)}\|^2.
  \end{align*}
  For the canonical link, $f_\lambda$ is explicitly given by 
  $f_\lambda(\boldsymbol{\beta}) = \sum_{i=1}^n b(\boldsymbol{x}_i'\boldsymbol{\beta} ) - y_i\boldsymbol{x}_i'\boldsymbol{\beta} + \lambda \|\boldsymbol{\beta}\|^2/2$
  with the following first order optimality conditions at $\boldsymbol{\beta}_\lambda$.
  \begin{align*}
    0 &= \frac{\partial f_\lambda}{\partial \boldsymbol{\beta}}(\boldsymbol{\beta}_\lambda) = \boldsymbol{X}'(b'(\boldsymbol{X}\boldsymbol{\beta}_\lambda) - \boldsymbol{y}) + \lambda \boldsymbol{\beta}_\lambda.
  \end{align*}
This implies, that $\boldsymbol{\beta}_\lambda = -\frac{1}{\lambda}\boldsymbol{X}'(b'(\boldsymbol{X}\boldsymbol{\beta}_\lambda) - \boldsymbol{y}) \in \text{span}(\boldsymbol{X}')$ and, therefore,
$(\boldsymbol{I}-\boldsymbol{X}'(\boldsymbol{X}\boldsymbol{X}')^{-1}\boldsymbol{X})\boldsymbol{\beta}_\lambda = 0$.
  Rewriting the optimality conditions and using Lemma~\ref{lemma:betalam} also show that
  \begin{align*}
    \boldsymbol{X}'(b'(\boldsymbol{X}\boldsymbol{\beta}_\lambda) - \boldsymbol{y}) = - \lambda \boldsymbol{\beta}_\lambda \to \boldsymbol{0} \text{ for } \lambda \to 0.
  \end{align*}
  Since $\text{rank}(\boldsymbol{X})=n$, this is only possible if $b'(\boldsymbol{X}\boldsymbol{\beta}_\lambda) \to \boldsymbol{y}$ for $\lambda\to 0$, and, since the canonical link $g$ is continuous, this implies 
  $g(b'(\boldsymbol{X}\boldsymbol{\beta}_\lambda)) =  \boldsymbol{X}\boldsymbol{\beta}_\lambda\to g(\boldsymbol{y})$.
  Therefore, the second term also vanishes for $\lambda \to 0$ and  $\boldsymbol{\beta}_\lambda \to \boldsymbol{\hat\beta}$.
\end{proof}

\section{Additional tables and figures for simulations}

\begin{table}[htbp!]
    \centering
    \resizebox{1.0\textwidth}{!}{
\begin{tabular}{lllllllll}
\toprule
\multicolumn{1}{c}{Method} & \multicolumn{2}{c}{binomial(logit)} & \multicolumn{2}{c}{gaussian(identity)} & \multicolumn{2}{c}{gaussian(log)} & \multicolumn{2}{c}{poisson(log)} \\
\midrule
L2\_cv & 9.993 & (1.399) & 28.435 & (0.839) & 1458.825 & (140.805) & 580.668 & (45.346)\\
L2\_dev08 & 0.677 & (0.008) & 10.065 & (0.182) & 1448.049 & (32.237) & 77.657 & (1.487)\\
L2\_dev095 & 0.094 & (0.001) & 2.695 & (0.038) & 345.963 & (6.554) & 16.759 & (0.305)\\
L2\_dev0999 & 0.001 & (0) & 0.246 & (0.002) & 23.346 & (0.316) & 0.791 & (0.042)\\
\bottomrule
\end{tabular}
}
\caption{Average chosen $\lambda$ for different estimators over $100$ replications ($n=200,p=2000$, 
  block-diagonal $\boldsymbol{\Sigma}$ and medium sparsity).}
    \label{tab:scr_lambda}
\end{table}

\begin{figure}[t!]
  \centering
  \includegraphics[width=1.0\columnwidth]{./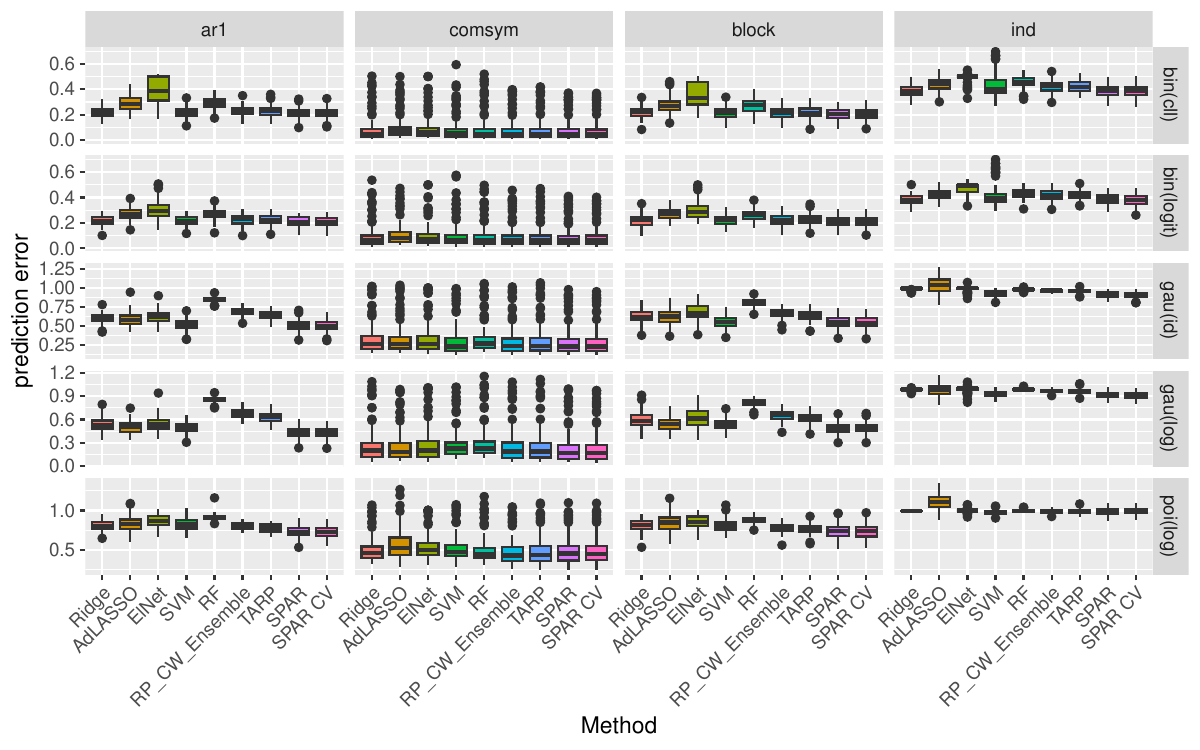}
  \caption{Prediction error for different covariance structures with medium setting for $p=2000$}
  \label{fig:sim_pred_cov_settings}
\end{figure}

\begin{figure}[t!]
  \centering
  \includegraphics[width=1.0\columnwidth,trim=0 20 0 0,clip]{./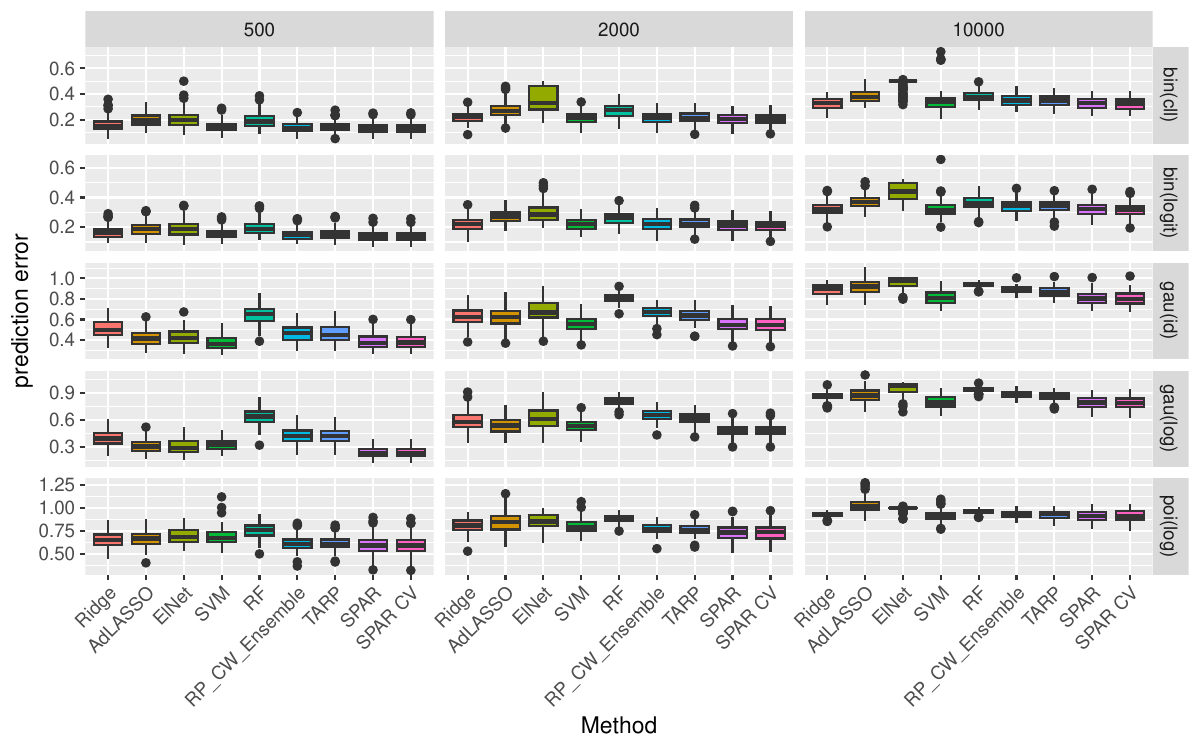}
  \caption{Prediction error for increasing $p$ with medium setting and block-diagonal covariance}
  \label{fig:sim_pred_p}
\end{figure}

\end{appendices}

\bibliographystyle{unsrtnat}
% \bibliography{lit}% common bib file
%%% Uncomment this line and comment out the ``thebibliography'' section below to use the external .bib file (using bibtex) .

%%% Uncomment this section and comment out the \bibliography{references} line above to use inline references.

\end{document}